\newtheorem{theorem}{Theorem}[section]
\newtheorem{lemma}[theorem]{Lemma}
\newtheorem{cor}[theorem]{Corollary}
\renewcommand{\eqref}[1]{Eq.~(\ref{eq:#1})}
\newcommand{\tabref}[1]{Table~\ref{tab:#1}}        
\newcommand{\secref}[1]{Section~\ref{sec:#1}}
\newcommand{\thmref}[1]{Theorem~\ref{thm:#1}}
\newcommand{\lemref}[1]{Lemma~\ref{lem:#1}}
\newcommand{\corref}[1]{Cor.~\ref{cor:#1}}
\newcommand{\appref}[1]{Appendix~\ref{app:#1}}
\newcommand{\algref}[1]{Alg.~\ref{alg:#1}}
\renewcommand{\P}{\mathbb{P}}
\newcommand{\reals}{\mathbb{R}}
\newcommand{\nats}{\mathbb{N}}
\newcommand{\ceil}[1]{{\lceil #1\rceil}}
\DeclareMathOperator*{\argmin}{argmin}
\newcommand{\cA}{\mathcal{A}}
\newcommand{\kpar}{\texttt{$k$-means||}}
\newcommand{\cost}{\mathsf{cost}}
\newcommand{\coutput}{C_{\mathrm{out}}}
\newcommand{\inum}{I}
\newcommand{\thr}{v}
\newcommand{\ctmp}{C_{\mathrm{iter}}}
\newcommand{\ka}{k_+}
\newcommand{\kdelta}{d_k}
\newcommand{\comment}[1]{\ensuremath{\triangle~}\emph{#1}}
\newcommand{\OPT}{\mathrm{OPT}}
\newcommand{\smallp}
{C_{\mathrm{small}}^\alpha}
\newcommand{\largep}
{C_{\mathrm{large}}^\alpha}
\newcommand{\eo}{\psi}
\newcommand{\mainalgname}{\texttt{SOCCER}}
\newcommand{\manuallabel}[2]{\def\@currentlabel{#2}\label{#1}}
\title{Fast Distributed $k$-Means with a Small Number of Rounds}
\author[1]{Tom Hess}
\author[2]{Ron Visbord}
\author[3]{Sivan Sabato}
\affil[1,3]{Department of Computer Science, Ben Gurion University, Beer-Sheva, Israel}
\affil[2]{Independent researcher}
{
    \makeatletter
    \renewcommand\AB@affilsepx{: \protect\Affilfont}
    \makeatother

    \makeatletter
    \renewcommand\AB@affilsepx{, \protect\Affilfont}
    \makeatother
    \affil[1]{\normalsize \texttt{tomhe@bgu.ac.il}}
    \affil[2]{\normalsize \texttt{ronvisbord@gmail.com}}
    \affil[3]{\normalsize \texttt{sabatos@cs.bgu.ac.il}}
}
\date{}
\begin{document}

\maketitle

\begin{abstract} 
We propose a new algorithm for $k$-means clustering in a distributed setting, where the data is distributed across many machines, and a coordinator communicates with these machines to calculate the output clustering. 
  Our algorithm guarantees a cost approximation factor and a number of communication rounds that depend only on the computational capacity of the coordinator. Moreover, the algorithm includes a built-in stopping mechanism, which allows it to use fewer communication rounds whenever possible. We show both theoretically and empirically that in many natural cases, indeed $1-4$ rounds suffice. In comparison with the popular \kpar\ algorithm, our approach allows exploiting a larger coordinator capacity to obtain a smaller number of rounds. Our experiments show that the $k$-means cost obtained by the proposed algorithm is usually better than the cost obtained by \kpar, even when the latter is allowed a larger number of rounds. Moreover, the machine running time in our approach is considerably smaller than that of \kpar. Code for running the algorithm and experiments is available 
at \url{https://github.com/selotape/distributed_k_means}.
\end{abstract}

\section{Introduction} \label{sec:introduction}

Modern datasets can be very large, requiring algorithms that can handle massive amounts of data. This need drives the development of distributed algorithms, which use many machines that work in parallel to solve the given problem faster. In some cases, the data is already split among separate machines, again calling for a distributed solution. 
In this work, we study the classical problem of $k$-means clustering \cite{sebestyen1962decision} in the distributed setting. 
The goal of a $k$-means clustering algorithm is to select cluster centers from the input dataset that induce a $k$-means cost as close as possible to the smallest cost that can be obtained for the dataset. In a distributed framework, a main bottleneck in many practical settings
is communication. Most distributed algorithms run in communication rounds, where
in each round each machine performs an individual task, and the machines
synchronize and communicate after each round. The number of rounds is a crucial factor in the practical performance of distributed algorithms, since each such round requires synchronization and communication between the machines, which are costly and can cause time delays. Therefore, reducing the number of rounds as much as possible is a key goal for distributed algorithms.

We
focus on a common practical distributed computation model
\citep{ene2011fast,guha2019distributed}, in which one machine, called the
\emph{coordinator}, communicates with all other machines, while the data to
cluster is distributed among the machines. We consider the case where the
coordinator is capable of running heavier computations, while the machines are
more limited in their computation power and do not communicate among
themselves. This model is suitable, for instance, when the dataset to cluster
is partitioned between low-end mobile devices, and the coordinator is a stronger
machine. The data may be split among  the machines for the purpose
of performing the distributed computation, or it may be partitioned among the devices to begin with, for instance if each device has independently collected data points (e.g., by taking pictures using the device's camera). We do not make any assumption on the partition of the data, thus we support also non-i.i.d.~data.
We note that the model that we consider is different from the Federated Learning model \cite{yang2019federated,ghosh2019robust}, which emphasizes other requirements, such as privacy.

One of the most popular distributed clustering algorithms is
\kpar\ \citep{bahmani2012scalable}. This algorithm approximates 
the optimal $k$-means cost on the dataset up to a constant approximation factor, assuming that this cost
is bounded away from zero (see the example and discussion in
\citealt{bachem2017distributed}), and that a sufficient number of communication
rounds is performed. However, \kpar\ does not have an adaptive mechanism to decide how many communication rounds to run. Therefore, the number of rounds is usually set heuristically, in which case the guarantee for a constant approximation factor might not hold. 
Other distributed algorithms (e.g., \citealp{balcan2013distributed}) use only a single round of communication by definition, but do not scale well when the number of machines is large.

In this work, we propose the new distributed $k$-means clustering algorithm, \mainalgname\ (Sampling, Optimal Clustering Cost Estimation, Removal), which guarantees a constant approximation factor that depends only on the computational capacity of the coordinator, without requiring the optimal clustering cost to be bounded away from zero. Moreover, the algorithm automatically stops once a sufficient number of rounds has been completed, which can be much earlier than the worst-case number of rounds. We demonstrate that in many natural datasets, the number of rounds required by \mainalgname\ is much smaller than the worst-case upper bound. In particular, we prove that \mainalgname\ stops after a single round if the dataset is drawn from a high-dimensional Gaussian mixture. In addition, we prove that there are datasets such that \mainalgname\ stops after one round and obtains a constant approximation factor, while \kpar\ requires $k-1$ rounds for the same result.

We empirically compare \mainalgname\ to \kpar\ on synthetic and real datasets, showing that indeed in practical scenarios, \mainalgname\ stops after $1-4$ rounds. In contrast, \kpar\ does not have a stopping condition, and when stopped after the same or a similar number of rounds, it usually obtains a worse  clustering cost. Moreover, the machine run time of \mainalgname\ is almost always significantly smaller than that of \kpar\ for a comparable final cost.

Our technique is based on letting the coordinator run a (centralized) clustering algorithm on a limited number of points, and using this clustering to calculate an estimate of a truncated version of the optimal $k$-means cost on the dataset. This provides information to the machines that allows them to progressively remove points from their part of the dataset. When all the points are removed, the algorithm stops and calculates the final clustering from the centers selected in the centralized clustering runs. 
\mainalgname\ combines clustering approaches designed for
two different settings: The distributed setting \citep{ene2011fast} and
the online setting \citep{hess2021constant}.  
\citet{ene2011fast} iteratively samples points from the machines and then removes points that are close to them from consideration. We show that calculating a clustering on the point sample, along with a technique adapted from \citet{hess2021constant}, lead to a more accurate removal of points. This provides a practical and successful algorithm with approximation guarantees that depend only on the number of points that the coordinator can cluster.

\textbf{Our contribution~~} To summarize, \mainalgname\ is a new distributed $k$-means algorithm that is equipped with theoretical guarantees on its cost approximation factor and number of communication rounds, and requires an even smaller number of rounds in practice. Our experiments demonstrate its practical advantages in comparison with \kpar, in a distributed model which allows the coordinator to calculate a clustering on a limited number of points. Some of the proofs and experiment results are deferred to the appendices.

\section{Related work}\label{sec:related}

A naive approach to distributed clustering would be to implement a centralized algorithm in a straightforward manner under the distributed model. However, this tends to be impractical, since it requires a large number of communication rounds (see, e.g., the discussion in \citealp{bahmani2012scalable}). Therefore, algorithms that are specifically tailored to the distributed setting have been suggested.
Many of the algorithms   that we mention below select more than $k$ centers. It is then standard to use a weighted centralized $k$-means algorithm to reduce the number of centers to exactly $k$.  It is known \citep[e.g.,][Theorem 4]{guha2003clustering} that this preserves approximation guarantees up to constants.

One  common technique used in many distributed algorithms has the following structure: Each machine calculates a set of representatives of its own data (sometimes called \emph{coresets}). 
these are then sent to the coordinator, which uses them to calculate a set of centers  \citep{ailon2009streaming,balcan2013distributed,feldman2020turning,bachem2017practical}. These algorithms require a small constant number of communication rounds. However, the technique has the drawback that the run time and the memory size of the coordinator increase with the number of machines after suppressing a certain threshold, while in \mainalgname\, the running time of the coordinator and the machines improves linearly with the number of the machines (see experiments in \citealt{bahmani2012scalable}). Some works address the setting of distributed $k$-means with outliers \citep{guo2018distributed,guha2019distributed,chen2018practical}. These algorithms also require coordinator resources that increase with the number of machines. 
 Other distributed algorithms obtain superior guarantees, but under strong structural assumptions on the data, such as a small aspect ratio or perturbation-resilient instances \citep{voevodski2021large}, or on the partition of the data into machines \citep{bhaskara2018distributed}.

 As mentioned above, one of the most successful distributed
 $k$-means algorithms to date is \kpar
 \citep{bahmani2012scalable}, which is widely used in practice (e.g., in the
 \texttt{MLLib} library of Apache Spark, \citealp{meng2016mllib}) and also has
 theoretical guarantees. \kpar\ proposes a distributed seeding algorithm that selects a small number of potential centers. 
 The worst-case number of rounds of \kpar\ is
 $O(\log(n/\mathsf{opt}))$, where $\mathsf{opt}$ is the optimal $k$-means cost
 of the dataset. This guarantee requires the optimal $k$-means cost
 to be bounded away from zero (see also the example in
 \citealp{bachem2017distributed}).  \citet{bachem2017distributed} show that if the
variance of the dataset is bounded and $\mathsf{opt}$ is bounded away from zero, then
 \kpar\ can be stopped after a constant number of rounds. However, this requires additional information about the dataset.

 \citet{ene2011fast} proposed a distributed $k$-median algorithm (which can easily be adapted to $k$-means) with a number of communication rounds  that depends on the memory size of the coordinator.
  In each round, each machine draws two random sub-samples from its data, and sends them to the coordinator.  The coordinator adds the first  sample from each machine to the output  clustering, and uses the second sample to calculate a threshold using a simple quantile statistic. 
Then, the threshold and most of the points received by the coordinator are sent to all the machines. Each machine then removes from its dataset the points whose distance to the current clustering does not exceed the threshold. The total number of removed points is by definition a fixed fraction of the dataset. 
 The final round occurs when the remaining points in the machines fit entirely in the coordinator memory.  \citet{chen2016communication} proposed a variation on this idea that reduces the total communication, while increasing the number of rounds. \citet{kumar2015fast} generalized this technique to other related problems.  Despite its theoretical guarantees, the algorithm  of \cite{ene2011fast} has significant disadvantages in practice. First, it always uses the worst-case number of rounds. In addition, in practice, on reasonable dataset sizes, the sub-sampling does not significantly reduce the number of points relative to the original dataset. This means that calculating the final clustering is not much easier than calculating a clustering on the original dataset. In addition, the number of points sent from the coordinator to the machines is large, leading to both to a heavy communication requirement and a heavy computation in each machine. We show below how the approach of \mainalgname\ avoids these issues.

Most existing algorithms for distributed
clustering can be applied to both the $k$-means and $k$-medoids formulations with slight adaptations, and so the same
body of work is generally relevant for both formulations. The difference in formulation between $k$-means and $k$-medoids is manifested in the constant approximation factor when using a black box offline clustering algorithm.

\section{Setting and notation}
 \label{sec:setting}

For an integer $l$, denote $[l] := \{1,\ldots,l\}$.
Let $(X,\rho)$ be a finite metric space, where $X$ is a set of size $n$ and
\mbox{$\rho:X \times X \rightarrow \reals_+$} is a metric. For a point $x \in X$ and a set $T \subseteq X$, let $\rho(x,T) := \min_{y \in T} \rho(x,y)$. For simplicity, we use set notations for datasets, although they can include duplicates.
For an integer $k  \geq 2$, a $k$-clustering of $X$ is a set of (at most) $k$ points from $X$ which represent cluster centers. 
Given a set $S \subseteq X$, the $k$-means cost of $T$ on $S$ is  $\cost(S,T):= \sum_{x \in S} \rho(x, T)^2$. The goal when clustering $X$ is to find a clustering $T$ with a low cost $\cost(X,T)$. We denote by $\OPT$ an optimal $k$-means clustering: $\OPT \in \argmin_{T  \subseteq X, |T| \leq k}\cost(X,T)$.

 A (centralized) $k$-means algorithm $\cA$ takes as input a finite set of points $S$ and the parameter $k$, and outputs a $k$-clustering of $S$, denoted $\cA(S,k)$. 
For $\beta \geq 1$, $\cA$ is a \emph{$\beta$-approximation} $k$-means algorithm on $(X,\rho)$, if for all input sets $S \subseteq X$, $\cost(S,\cA(S,k)) \leq \beta \cdot \cost(S, \OPT_S)$, where $\OPT_S$ is an optimal solution on $S$ with centers from $S$: $\OPT_S \in \argmin_{T  \subseteq S, |T| \leq k}\cost(S,T)$.  In the centralized setting, the best known approximation constant for an efficient $k$-means algorithm is $9$ for a general metric space  and $6.357$ for Euclidean spaces \citep{ahmadian2019better}. 

In the coordinator model \citep{guha2019distributed} which we study, 
 the data $X$ is arbitrarily partitioned among $m$ machines, where $X_j$ denotes the set of points in machine $j\in[m]$. The machines communicate directly only with the coordinator. Broadcasts from the coordinator to the machines are counted as a single transmission. The computation is conducted in rounds, where in each round the machines perform an individual task and then communicate with the coordinator.

\section{The guarantees of  \mainalgname}\label{sec:theorem}
In this section, we present the guarantees of \mainalgname, our new distributed $k$-means algorithm, which is described in detail in \secref{main_structure}.
Similarly to \citet{ene2011fast}, we assume a bound of $\tilde{\Theta}(kn^\epsilon)$ on the number of points that can be stored in the memory of the coordinator, where $\epsilon \in (0,1)$ is a parameter linking the coordinator size with the dataset size. Specifically, we assume that the coordinator
can calculate a (centralized) clustering over a dataset of size $\eta (\epsilon) = 36kn^{\epsilon} \log (\frac{1.1k}{\delta \epsilon})$, and can store the same order of magnitude of data points. We further assume it has access to a centralized black-box $k$-means algorithm $\cA$ that can be used for this purpose.
The clustering is used by \mainalgname\ to calculate an estimate of a truncated version of the optimal attainable $k$-means cost for the dataset $X$. 
As mentioned above, most distributed algorithms select more than $k$ centers.
This number can then reduced to $k$ using a standard weighted clustering technique. \mainalgname\ selects only slightly more than $k$ centers, making the final reduction step easier. 
The following theorem gives the guarantees of  \mainalgname. The proof is provided in \secref{analysis} and the appendices referenced there.

\begin{theorem}
\label{thm:main_theorem1} 
Suppose that the size of the dataset $X$ is a sufficiently large $n$. Suppose that \mainalgname\ runs with a confidence parameter $\delta \in (0,1)$, a coordinator parameter $\epsilon \in (0,1)$, and number of centers $k\geq 5$, and suppose that the black-box algorithm $\cA$ is a $\beta$-approximation $k$-means algorithm. Denote the total number of communication rounds until \mainalgname\ stops by $I$, and denote the set of cluster centers it selects by $\coutput$. Then, with probability at least $1-\delta$,
\begin{itemize}
\item  $\inum < \frac{1}{\epsilon}-1$;
\item  $|\coutput| \leq \inum \cdot (k+ 9\log \frac{1.1k}{\delta \epsilon})$;
\item $\cost(X,\coutput) \leq   \inum \cdot (80\beta+ 44) \cdot \cost(X,\OPT)$;
\item The total number of points transmitted to the coordinator is at most $\inum \cdot \eta(\epsilon)= 72\inum  k n^{\epsilon}\log (\frac{1.1k}{\delta \epsilon})$. 
\item The total number of points broadcasted from the coordinator is at most $\inum \cdot (k+ 9\log \frac{1.1k}{\delta \epsilon})$. 
\end{itemize}
\end{theorem}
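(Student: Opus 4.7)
The plan is to analyze the algorithm round by round and to combine the per-round guarantees by a union bound over the at most $1/\epsilon$ rounds. Let $X_\alpha$ denote the set of surviving points at the start of round $\alpha$, let $\ctmp^{(\alpha)}$ be the clustering that the coordinator computes via $\cA$ on the uploaded sample $P_1$, and let $\thr_\alpha$ be the threshold derived from the second sample $P_2$; a point $x \in X_\alpha$ is removed in round $\alpha$ iff $\rho(x,\ctmp^{(\alpha)})^2 \leq \thr_\alpha$, and $\coutput = \bigcup_{\alpha \leq \inum} \ctmp^{(\alpha)}$. Four of the five bullets will follow once each round is shown to shrink $|X_\alpha|$ by a factor of roughly $n^{-\epsilon}$ with high probability.

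For the round-count bound, I would use a Chernoff-type concentration on $P_2$ to argue that $\thr_\alpha$ is chosen so that at least a $(1-n^{-\epsilon})$-fraction of $X_\alpha$ is removed, giving $|X_{\alpha+1}| \leq n^{-\epsilon} |X_\alpha|$. Iterating yields $|X_{\inum+1}| \leq n^{1-\epsilon \inum}$, and since the stopping test fires as soon as $|X_{\alpha+1}| \leq \eta(\epsilon) = \Theta(kn^\epsilon \log(k/(\delta\epsilon)))$, we obtain $\inum < 1/\epsilon - 1$. The center-count and communication bounds are then bookkeeping: in each round the coordinator uploads at most $\eta(\epsilon)$ sampled points and broadcasts at most $k + 9\log(1.1k/(\delta\epsilon))$ centers (the $k$ coming from $\cA$, the logarithmic term from the few ``large'' clusters explicitly flagged by the coordinator); multiplying by $\inum$ yields the communication totals as well as the bound on $|\coutput|$.

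The main obstacle is the cost bound. The target inequality is $\cost(X,\coutput) \leq \inum \cdot (80\beta + 44) \cdot \cost(X,\OPT)$, which I would prove by showing that the points removed in round $\alpha$ contribute at most $(80\beta + 44) \cdot \cost(X_\alpha, \OPT_{X_\alpha}) \leq (80\beta + 44) \cdot \cost(X,\OPT)$ to $\cost(X,\coutput)$, using $\ctmp^{(\alpha)} \subseteq \coutput$. The argument proceeds in three steps. First, condition on a high-probability event under which $\thr_\alpha$ is neither too small (so that enough points are removed to sustain the geometric shrinkage above) nor too large (so that it reflects a truncated version of the optimal cost on $X_\alpha$). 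Second, use the $\beta$-approximation of $\cA$ on $P_1$ together with a truncated sample-mean concentration in the spirit of \citet{hess2021constant} to translate this into a bound on $\cost(X_\alpha \setminus X_{\alpha+1}, \ctmp^{(\alpha)})$ in terms of the truncated optimal cost of $X_\alpha$. The constants $80\beta + 44$ will emerge from a squared-triangle-inequality decomposition $\rho(x,\ctmp^{(\alpha)})^2 \leq 2\rho(x,\OPT)^2 + 2\rho(\OPT(x),\ctmp^{(\alpha)})^2$ (contributing the factor of $2$), compounded with the $\beta$ slack from $\cA$ and a further factor from quantile/truncation slack. Third, a union bound over the at most $1/\epsilon$ rounds and the $O(1)$ concentration events per round, each failing with probability $O(\delta\epsilon)$, gives the overall $1-\delta$ guarantee. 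The subtle point throughout is that $\OPT_{X_\alpha}$ need not be compatible with $\OPT$, so the quantile structure of $\thr_\alpha$ must be used carefully to ensure that removed points are precisely those that $\ctmp^{(\alpha)}$ already covers well, allowing the charge to be absorbed into $\cost(X,\OPT)$.
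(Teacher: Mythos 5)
Your proposal takes essentially the same route as the paper: a union bound over the at most $1/\epsilon$ rounds of a per-round lemma asserting (i) $\cost(R_i,\ctmp^i)\leq(80\beta+44)\cost(V_i,\OPT)$ for the removed points and (ii) the geometric shrinkage $|V_{i+1}|<|V_i|/n^{\epsilon}$, both obtained by adapting the truncated-cost concentration machinery of \citet{hess2021constant}, followed by summing the per-round cost bounds over the partition of $X$ into the removed sets. The only detail left implicit in your sketch is the contribution of the residual set $V_{\inum}$ clustered in the final step, which is absorbed directly by the $\beta$-approximation guarantee of $\cA$; also note that the $k+9\log(1.1k/(\delta\epsilon))$ centers per round all come from a single call to $\cA$ with $\ka$ centers rather than from separately flagged clusters, though this does not affect the count.
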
 
We note that while the theorem above lists specific constants, these are in fact interdependent, so that, for instance, one can allow a larger coordinator memory constant, to obtain a significantly smaller cost approximation constant; see also the discussion in \secref{analysis}.

Before presenting the algorithm, we compare the guarantees above to the closest relevant results. In comparison with the algorithm of \citet{ene2011fast} (henceforth EIM11), \mainalgname\ uses the same number of communication rounds in the worst case. However, as will be made evident below, unlike EIM11, it can use considerably fewer rounds on many natural datasets. \mainalgname\ selects $\tilde{O}(k)$ centers, and these are all the points that the coordinator ever broadcasts to the machines. In contrast, EIM11 selects $\Omega(k n^\epsilon \log(n))$ centers and broadcasts all of them, thus its total communication to the machines is significantly larger. This also affects the computation resources required from the machines, as discussed in more detail in \secref{main_structure}. Like \mainalgname, EIM11 also obtains a constant approximation factor. While its approximation constant is smaller, the issues mentioned above make the algorithm impractical, as we observe in \secref{experiments}.

To compare these guarantees to \kpar, note that the worst-case number of rounds of \kpar\ is
$O(\log(n/\mathsf{opt}))$,
while in the theorem above (as in \citealp{ene2011fast}) it is
$1/\epsilon = \tilde{O}(\log(n)/\log(L/k))$,
where
$L$
is the limitation on the coordinator. If
$L$
is set to $\tilde{\Theta}(k)$ then the worst-case number of rounds of \mainalgname\ is similar to that of \kpar, except that it does not require $\mathsf{opt}$ to be bounded away from zero. In addition, and unlike \kpar, in this approach a larger $L$ can be used to reduce the worst-case number of iterations. Moreover, as seen below, \mainalgname\ stops on its own when the number of rounds is sufficient for the dataset. In contrast, the actual number of rounds of \kpar\ is a hyper-parameter.
In the next section, we give the full description of \mainalgname.

\begin{algorithm}[tb]
\caption{\mainalgname} \label{alg:main_algorithm}
\begin{algorithmic}[1]
\REQUIRE $\delta \in (0,1)$ (confidence), $k\in \nats$, $n\in \nats$ (data size), $\cA$ (a centralized $k$-means algorithm), 
$\epsilon \in (0,1)$ (coordinator parameter), $m\in [n]$ (number of machines).\\
At the beginning of the run, machine $j$ holds data $X_j$, where $X := \cup_{j\in [m]} X_j$.
\STATE $\coutput \leftarrow \emptyset$, $N \leftarrow n$. \label{line:settingoutput}
\WHILE{$N> \eta(\epsilon)$} \label{line:while-iteration}
\STATE $\alpha \leftarrow \eta(\epsilon)/N$.\label{line:alpha}
\STATE  For $l \in \{1,2\}$, each machine $j$ adds each point in $X_j$ to a set $P_j^l$ with independent probability $\alpha$.
\STATE  Each machine $j$ sends $P_j^1$ and $P_j^2$ to the coordinator.
\STATE In the coordinator:
\STATE \hspace{5mm} $P_1:= \underset{j \in [m]}{\bigcup}P_j^1, P_2:= \underset{j \in [m]}{\bigcup}P_j^2$.
\STATE \hspace{5mm} $\ctmp  \leftarrow \cA(P_1,\ka)$.
\STATE \hspace{5mm} $\thr := 2\cost_{\frac{3}{2}(k+1) \kdelta  }(P_2,\ctmp)/(3 k \kdelta)$. ~~\comment{See definition in \secref{main_structure}}\label{line:calc_thresh}
\STATE \hspace{5mm} $\coutput \leftarrow \coutput \cup \ctmp$.
 \STATE \hspace{5mm} Broadcast $(\thr,\ctmp)$ to each of the machines.
 \STATE Removal: Each machine $j$ updates:\\ \hspace*{1em}$X_j \leftarrow  \{x \in X_j \mid  \rho(x,\ctmp)^2 > \thr \}$. \label{line:remove_points}
\STATE Each machine sends $N_j := |X_j|$ to the coordinator, which sets $N\leftarrow \sum_{j \in [m]} N_j$.  
 \STATE $V \leftarrow \underset{j \in [m]}{\sum}X_j$. ~~~\comment{The machines sends $|X_j|$ to the coordinator.}
\ENDWHILE
\STATE All the machines send $X_j$ to the coordinator, which sets $V \leftarrow \cup_{j \in [m]} X_j$.
\STATE The coordinator calculates  $\coutput \leftarrow \coutput \cup  \cA(V,k)$\label{line:after_while}. \label{line:last_iteration}
 \STATE \textbf{return} $\coutput$.
\end{algorithmic}
\end{algorithm}

\section{The \mainalgname\ algorithm}\label{sec:main_structure}
\mainalgname\ is listed in \algref{main_algorithm}. It uses the notations
$\ka := k + 9\log(1.1k/(\delta\epsilon)), \kdelta := 6.5\log (1.1k/(\delta\epsilon) )$.
The underlying structure of \mainalgname\ is superficially similar to that of EIM11, which was described in \secref{related}. It runs a loop, where in each iteration, each machine sends the coordinator a sub-sample of its points. The coordinator then sends data points and a threshold to all machines. Then, each machine removes from its data the points that are closer to the sent points than the threshold. This is repeated in rounds, until the number of remaining points  is small enough so they can be stored in full in the coordinator.

Despite the external similarity in structure, \mainalgname\ is crucially different from EIM11 and its variants, which send most of  the points received by the coordinator to the machines, and remove a fixed fraction of the dataset in each round. The coordinator in \mainalgname\ uses the sub-samples received from the machines as input to the centralized black-box $k$-means clustering algorithm $\cA$. Then, it calculates an estimate of the truncated $k$-means cost of the centers selected by $\cA$ on the entire dataset. This estimate is then used to calculate the threshold that the machines use to remove points from their dataset.
The method for estimating the cost and calculating the threshold is based on a technique first proposed in \citet{hess2021constant}, which addresses a different setting of (centralized) online no-substitution clustering. In that work, the estimate is used for the purpose of on-the-fly center selection, when clustering a stream of points.  Our analysis shows how this type of estimate can be used to improve performance in the distributed setting, despite its original use for a completely different purpose.

In \mainalgname, in each iteration (corresponding to a communication round), each machine $j$ creates two sub-samples from its dataset, $P_j^1$ and $P_j^2$, which are then sent to the coordinator. These sub-samples are drawn independently at random from the machine's current set of points, where their sizes are set so that the total number of points sent to the coordinator by all machines is $\eta(\epsilon)$. The coordinator merges these sub-sample pairs into the respective sets $P_1$ and $P_2$. It then calculates a $\ka$-means clustering on $P_1$ using $\cA$, denoted $\ctmp$, and calculates a threshold using the \emph{truncated} cost of $\ctmp$ on $P_2$: For two sets $S,T \subseteq X$ and an integer $l$, the $l$-truncated cost of $T$ on $S$, denoted $\cost_l(S,T)$, is the total cost of the clustering after removing the $l$ points in $S$ that incur the most cost.

The coordinator adds $\ctmp$ to the output set $\coutput$, and sends $\thr$ and $\ctmp$ to each of the machines. Then, each machine removes from its dataset all the points whose distance from $\ctmp$ is at most $\sqrt{\thr}$.
Our analysis below shows that the truncated cost can be used to lower-bound the cost of points that belong to large clusters in the optimal $k$-means clustering of $X$. As a result, points that
are $\sqrt{v}$-close to some center in
$\ctmp$ are sufficiently close to an optimal center  to guarantee the final approximation factor.

Lastly, when sufficiently many points have been removed in each machine so that the entire remaining data can be handled by the coordinator, the loop terminates and the remaining points are sent to the coordinator, which calculates a $k$-clustering on them and adds the output centers to $\coutput$.

We note that the main computational burden in the machines is to calculate the distances of the data points they store from the points broadcasted by the coordinator. Therefore, the number of broadcasted points needs to be small for this burden to be reasonable. Indeed, in \mainalgname\ this number is only $\ka = k + 9\log(1.1k/(\epsilon\delta))$. In contrast, in EIM11 this number is $9kn^\epsilon \log(n/\delta)$.  Thus, for large datasets, the computational requirements from the machines in \mainalgname\ are lighter by orders of magnitude than those of EIM11.

$\ctmp$ and $v$ are calculated similarly to the centralized online clustering algorithm of \citet{hess2021constant} mentioned above. However, our constants are significantly smaller, as a result of a tighter analysis (see \appref{main_lemma}). The improvement of the constants is of significant practical importance: 
These constants are used by  \mainalgname. If they were too large, as in
\citet{hess2021constant}, then \mainalgname\ would be impractical. For instance, in
\citet{hess2021constant}, the number of outliers removed when calculating
the truncated cost is very large. Using the same number in
\mainalgname\ would have caused the fraction of removed
points in each round to be too small, leading to a large number
of rounds. Moreover,
these constants cannot be easily changed without careful
analysis, since they are inter-dependent. Finding an appropriate assignment of constants
that makes the algorithm practical while guaranteeing
the desired behaviour requires a delicate balance of many
competing quantities.

In the next section, we state the main lemma that we prove to derive the guarantees of \mainalgname.

\section{Main lemma} \label{sec:analysis}

We now give the main lemma that allows us to prove \thmref{main_theorem1}. First, we define necessary notation.
Consider the contents of the machine datasets $\{X_j\}_{j \in [m]}$ at the beginning of iteration $i$ in line \ref{line:remove_points} of \algref{main_algorithm}, and let $V_i := \cup_{j \in [m]} X_j$. Denote the points removed at iteration $i$ by $R_i:=V_i \setminus V_{i+1}$.  Let $\ctmp^i$ and $\alpha_i$  be the values of $\ctmp$ and $\alpha$, respectively, as calculated by \mainalgname\ at iteration $i$. 
To prove \thmref{main_theorem1}, we provide the following lemma, which is proved in \appref{main_lemma}.

\begin{lemma} \label{lem:main_lemma}
  Assume that \mainalgname\ runs with the parameters given in \thmref{main_theorem1}. Let $i \leq \inum$. With probability at least $1-\delta \epsilon$, 
\begin{itemize}
\item $\cost(R_i,\ctmp^i) \leq (80\beta+ 44)\cdot\cost(V_{i},\OPT)$; 
\item  $|V_{i+1}|\leq 5.5 k \kdelta/\alpha_i$. 
\end{itemize}
\end{lemma}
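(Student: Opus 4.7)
The plan is to fix iteration $i$ and establish both claims on a single high-probability event $E$ with $\Pr[E^c] \leq \delta\epsilon$. The event $E$ packages the concentration bounds needed below: multiplicative Chernoff concentration of $|P_1 \cap C_j^*|$ and $|P_2 \cap C_j^*|$ around their expectations $\alpha_i |C_j^*|$ for every optimal cluster $C_j^*$ of $V_i$ that is ``large'' (defined in the next paragraph), an upper Chernoff tail on $|P_2 \cap \text{small}|$, a lower Chernoff tail on $|P_2 \cap A^*|$ for the specific deterministic set $A^*$ used in the size bound, and Bernstein-type concentration of $\cost(P_1,\OPT)$ and $\cost(P_2,\OPT)$ around their expectations after per-cluster truncation of outliers to tame the variance of the individual $\rho(x,c_j^*)^2$ terms. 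The failure probability is controlled by a union bound over $O(k)$ events, each allocated budget $\delta\epsilon/O(k)$; the choices $\kdelta=6.5\log(1.1k/(\delta\epsilon))$ and $\eta(\epsilon)=36kn^{\epsilon}\log(1.1k/(\delta\epsilon))$ give enough slack for these bounds to hold.

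For the size bound I argue by contradiction. First, a deterministic inequality: sorting the per-point costs $c_1\geq c_2\geq\cdots$ of $P_2$ against $\ctmp^i$, the defining identity $v=\tfrac{2}{3k\kdelta}\sum_{j>l}c_j$ (with $l=3(k+1)\kdelta/2$) immediately forces $|P_2\cap V_{i+1}|<\tfrac{3(2k+1)\kdelta}{2}$, because if $M:=|P_2\cap V_{i+1}|>l$ then $\sum_{j>l}c_j\geq (M-l)v$ yields $M\leq l+\tfrac{3k\kdelta}{2}$. To exploit this I introduce the deterministic super-level set $A^*:=\{x\in V_i:\rho(x,\ctmp^i)^2>r^*\}$, where $r^*$ is the smallest threshold with $|A^*|\leq 5.5k\kdelta/\alpha_i$. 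A single Chernoff lower-tail application to $A^*$ on $E$ gives $|P_2\cap A^*|\geq\tfrac{2}{3}\alpha_i|A^*|$, which is roughly $\tfrac{11}{3}k\kdelta$. If the algorithm's $v$ were below $r^*$, then $V_{i+1}\supseteq A^*$ and $|P_2\cap V_{i+1}|\geq|P_2\cap A^*|>\tfrac{3(2k+1)\kdelta}{2}$ for $k\geq 5$, contradicting the deterministic bound. Therefore $v\geq r^*$ and $|V_{i+1}|\leq|A^*|\leq 5.5k\kdelta/\alpha_i$.

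For the cost bound I partition the optimal clusters of $V_i$ into \emph{large} (size $\geq s:=\kdelta/\alpha_i$) and \emph{small}. The total size of small clusters is at most $ks=k\kdelta/\alpha_i$, and Chernoff's upper tail gives $|P_2\cap\text{small}|\leq l$ on $E$. I split $\cost(R_i,\ctmp^i)=\cost(R_i\cap\text{large},\ctmp^i)+\cost(R_i\cap\text{small},\ctmp^i)$ and handle the two halves differently. On the large side, the doubling inequality $\rho(x,\ctmp^i)^2\leq 2\rho(x,c_j^*)^2+2\rho(c_j^*,\ctmp^i)^2$ reduces the problem to bounding $|C_j^*|\rho(c_j^*,\ctmp^i)^2$, which I control by averaging the same inequality over $P_1\cap C_j^*$: since $|P_1\cap C_j^*|\geq\alpha_i|C_j^*|/2$ on $E$, this yields $|C_j^*|\rho(c_j^*,\ctmp^i)^2\leq(4/\alpha_i)(\cost(P_1\cap C_j^*,\OPT)+\cost(P_1\cap C_j^*,\ctmp^i))$. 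Summing over large $j$ and invoking (a) the $\beta$-approximation guarantee $\cost(P_1,\ctmp^i)\leq\beta\cost(P_1,\OPT_{P_1}(\ka))$ of $\cA$, (b) the projection bound $\cost(P_1,\OPT_{P_1}(\ka))\leq 4\cost(P_1,\OPT)$ obtained by sending each non-empty $C_j^*\cap P_1$ to its closest point to $c_j^*$ (a valid $k$-clustering of $P_1$), and (c) the Bernstein concentration of $\cost(P_1,\OPT)$ around $\alpha_i\cost(V_i,\OPT)$ on $E$, produces an $O(\beta)\cost(V_i,\OPT)$ bound on the large-cluster contribution. On the small side, since the small-cluster sample points are absorbed by the $l$-truncation on $E$, $\cost_l(P_2,\ctmp^i)\leq\cost(P_2\cap\text{large},\ctmp^i)$; applying the same triangle/averaging technique on $P_2$ yields $\cost_l(P_2,\ctmp^i)\leq O(\beta)\alpha_i\cost(V_i,\OPT)$, and combined with the pointwise bound $\cost(R_i\cap\text{small},\ctmp^i)\leq|R_i\cap\text{small}|\cdot v\leq ks\cdot v$ and the choice $s=\kdelta/\alpha_i$ this gives $ks\cdot v\leq O(\beta)\cost(V_i,\OPT)$. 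Adding the two contributions and tracking constants carefully produces the stated bound.

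The main obstacle, I expect, is the Bernstein concentration of $\cost(P_1,\OPT)$ and $\cost(P_2,\OPT)$ around $\alpha_i\cost(V_i,\OPT)$, since the per-point costs $\rho(x,c_j^*)^2$ can be arbitrarily skewed within a cluster; this needs a per-cluster truncation removing the top few contributions before applying Bernstein, with the removed mass re-absorbed by the $l$-truncation on $P_2$. Beyond that, the explicit constants $80\beta+44$ and $5.5k\kdelta$ emerge only after carefully coordinating $\ka$, $\kdelta$, $s$, $l$, and the Chernoff/Bernstein slack parameters so that every threshold closes consistently---since the paper notes these constants are interdependent, the delicate step is this bookkeeping rather than any single conceptual leap.
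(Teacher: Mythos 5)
Your proposal reproduces the overall architecture of the paper's argument (which the paper itself largely imports from HMS21 via \lemref{hess_lemma}): split the optimal clusters of $V_i$ into large and small, bound the small-cluster contribution by a point count times the removal threshold $\thr$, bound $\thr$ through the truncated cost on $P_2$, relate $\ctmp^i$ to $\OPT$ on large clusters through the sample $P_1$ and the $\beta$-approximation guarantee, and prove the size bound by playing a Chernoff lower tail on a super-level set against the counting identity built into the definition of $\thr$. Your argument for the second bullet is essentially sound (modulo tie-breaking in the definition of $r^*$ and the usual conditioning on $P_1$ so that $A^*$ is deterministic for the $P_2$ sample), and your small-cluster bound via $|R_i\cap\text{small}|\cdot\thr$ is exactly the mechanism in the paper's combination step.

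There is, however, a genuine gap in the large-cluster cost bound. Your chain $\cost(P_1,\ctmp^i)\leq\beta\,\cost(P_1,\OPT_{P_1}(\ka))\leq 4\beta\,\cost(P_1,\OPT)\lesssim\beta\alpha_i\cost(V_i,\OPT)$ requires $\cost(P_1,\OPT)=O(\alpha_i\cost(V_i,\OPT))$ with probability $1-O(\delta\epsilon)$, and this is simply false: a single point $z$ with $\rho(z,\OPT)^2$ comparable to $\cost(V_i,\OPT)$ lands in $P_1$ with probability $\alpha_i$ (which need not be small), making $\cost(P_1,\OPT)$ a factor $1/\alpha_i$ too large; no Bernstein bound can rescue a statement that fails with probability $\alpha_i$. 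You flag the concentration issue, but your proposed repair --- per-cluster truncation whose removed mass is ``re-absorbed by the $l$-truncation on $P_2$'' --- acts on the wrong sample: the problematic term lives on the $P_1$ side, inside both the projection bound and the $\beta$-approximation step, and the $P_2$ truncation cannot touch it. The paper's resolution is structural and is the reason the algorithm clusters $P_1$ into $\ka=k+9\log(1.1k/(\delta\epsilon))$ centers rather than $k$: one first sets aside the set $F_{\alpha_i}(V_i)$ of the $\kdelta/\alpha_i$ points farthest from $\OPT$, observes that only $O(\kdelta)$ of them land in $P_1$, and spends the extra $9\log(1.1k/(\delta\epsilon))$ centers of the comparison clustering on exactly those points, so that $\cost(P_1,\OPT_{P_1}(\ka))$ is controlled by the projection cost of the \emph{non-far} points only, which does concentrate. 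Correspondingly, ``large'' must be defined as having more than $\kdelta/\alpha_i$ points \emph{after} removing $F_{\alpha_i}(V_i)$ (otherwise a ``large'' cluster could consist mostly of far points and your averaging over $P_1\cap C_j^*$ breaks), and the far points are then handled on the removal side together with the small clusters --- which is why the paper's count there is $(k+1)\kdelta/\alpha_i$ rather than your $k\kdelta/\alpha_i$, and why the truncation parameter is $\tfrac32(k+1)\kdelta$. Your proposal never uses the fact that $\ka>k$, which is the telltale sign that this ingredient is missing.
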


The first part of this lemma shows that in round $i$, the calculated cluster $\ctmp^i$ obtains a constant approximation on all the removed points in this round. This is later used to prove the overall approximation guarantee. The second part bounds the number of remaining points in each round, which is used to upper bound the number of rounds. 
 \thmref{main_theorem1} can now be proved using the lemma. The proof is provided in \appref{mainproof}

We note that the constants in \lemref{main_lemma} and, consequently, in
 \thmref{main_theorem1}, are interdependent. In particular, increasing the
 coordinator's capacity by a constant factor would decrease the cost
 approximation constant, since a larger memory constant would allow
 $P_1$ and $P_2$ to be larger, making them more representative of the full
 data, and leading to a smaller cost approximation factor. In addition, it
 would allow reducing the threshold for removal, again improving the accuracy
 at the expense of a larger coordinator capacity.

\section{Beyond worst-case: Why \mainalgname\ can stop after fewer rounds} \label{sec:beyond_the_worst_case}

As discussed above, a main desideratum of the distributed algorithm is to use a small number of communication rounds. While the worst-case number of rounds for \mainalgname\ is $\Theta(1/\epsilon)$, it stops earlier if sufficiently many data points are removed from the machine datasets, so that the current total data size can be handled by the coordinator. If this is the case, then also the approximation factor and the number of selected centers are smaller, as can be see in \thmref{main_theorem1}.

We now show that indeed, \mainalgname\ is likely to require fewer rounds on many natural datasets.  
\mainalgname\ calculates in each round the clustering $\ctmp$ based on the sub-sample sent from each machine. 
Our analysis shows that $\ctmp$ obtains a near-optimal clustering cost on points that in the optimal solution belong to clusters that are larger than $\kdelta/\alpha = \tilde{O}(n^{1-\epsilon}/k)$. Such points will typically be sufficiently close to $\ctmp$ to be removed from the machine dataset in the removal step. The number of points in small optimal clusters can be at most $k\kdelta/\alpha = O(n^{1-\epsilon})$. In many natural cases, and in particular when $n$ is sufficiently large, the optimal solution will have even fewer points, perhaps none, in such small clusters. Thus, almost all points will be removed in the first round. 
As a simple example, consider a dataset drawn from a $k$-Gaussian mixture. The following result shows that \mainalgname\ requires a single round to cluster such a dataset. The proof is provided in \appref{gaussian-p}. 
\begin{theorem}\label{thm:k_gauss}
Let $X$ be a dataset drawn from a  $k$-spherical Gaussian mixture. For sufficiently large $d$ and $n$, if  $\epsilon \geq \log\log (n/\delta) /\log n$, then with high probability, \mainalgname\ when running on $X$ will stop after one round, and output a clustering with a constant cost approximation factor.
\end{theorem}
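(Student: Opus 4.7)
The plan is to show that, with probability at least $1-\delta$, the while loop of \mainalgname\ terminates after a single iteration ($\inum=1$). Plugging $\inum=1$ into \thmref{main_theorem1} then immediately gives $\cost(X,\coutput) \le (80\beta+44)\cost(X,\OPT)$, which is a constant-factor approximation, yielding the second part of the statement.

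I would begin by conditioning on standard high-probability events for the $k$-spherical Gaussian mixture $X$ with means $\mu_1,\dots,\mu_k$ and variance $\sigma^2 I_d$: (E1) each mixture component contributes $\Theta(n/k)$ data points, by a Chernoff bound on the latent component labels; (E2) every $x \in X$ satisfies $\rho(x,\mu_{j(x)})^2 \le c_1 d\sigma^2$ for an absolute constant $c_1$, by $\chi^2_d$ concentration together with a union bound over the $n$ points (the hypothesis $\epsilon \ge \log\log(n/\delta)/\log n$ gives $n^\epsilon \ge \log(n/\delta)$, so ``sufficiently large $d$'' makes the union bound nontrivial); (E3) the sub-samples $P_1,P_2$ constructed in the first iteration of \algref{main_algorithm} each contain $\Theta(\eta(\epsilon)/k)$ points from every component, by Chernoff on the independent inclusion bits, valid because $\eta(\epsilon)/k = \Theta(n^\epsilon\log(k/(\delta\epsilon))) \gg 1$ under the hypothesis on $\epsilon$.

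Next I would analyse the single iteration. Since $\cA$ is a $\beta$-approximation and $\ka \ge k$, comparing $\cost(P_1,\ctmp)$ against the $k$-clustering whose centers are the Gaussian means gives $\cost(P_1,\ctmp) \le \beta c_1 d\sigma^2 |P_1| = O(\beta\eta(\epsilon) d\sigma^2)$. I would then extract the structural claim that every mean $\mu_j$ has a $\ctmp$-center within distance $D = O(\sigma\sqrt{\beta k d})$, by contradiction: were $\rho(\mu_j,\ctmp) \gg D$, then by (E2) and the triangle inequality each of the $\Theta(\eta(\epsilon)/k)$ points of $P_1$ from component $j$ would satisfy $\rho(x,\ctmp)^2 = \Omega(\rho(\mu_j,\ctmp)^2)$, and summing them would exceed the cost budget above. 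Combining this structural claim with (E2) and the triangle inequality yields a uniform bound $\rho(x,\ctmp)^2 \le c_2\beta k d\sigma^2$ for every $x\in V_1$. Turning to the threshold, the same cost bound applied to $P_2$, together with a standard concentration bound for the sample cost of the fixed center set $\ctmp$, gives $\cost(P_2,\ctmp) = \Theta(\beta\eta(\epsilon) d\sigma^2)$; since the truncation parameter $(3/2)(k+1)\kdelta = O(k\log(k/(\delta\epsilon)))$ is much smaller than $|P_2|=\eta(\epsilon)$ and no small subset of points dominates the cost (using (E2) again), the truncation in line~\ref{line:calc_thresh} is lossless up to a constant factor, yielding $\thr = \Omega(\beta n^\epsilon d\sigma^2)$. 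The removal criterion $\rho(x,\ctmp)^2 \le \thr$ therefore reduces to $c_2 k = O(n^\epsilon)$, which holds for sufficiently large $n$ under the hypothesis on $\epsilon$. Hence every point of $V_1$ is removed in the first iteration, $V_2 = \emptyset$, the loop condition $N > \eta(\epsilon)$ fails, and the algorithm terminates with $\inum=1$.

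The main technical obstacle is the structural claim above, namely that the $\beta$-approximate solution $\ctmp$ on $P_1$ must place a center near every Gaussian mean without any explicit mean-separation hypothesis; the pigeonhole argument sketched above works only because the high-dimensional Gaussian setting simultaneously provides the per-component sample size of (E3) and the sharp squared-norm concentration of (E2). A secondary technical point, losslessness of the truncation defining $\thr$, reduces to the same concentration argument.
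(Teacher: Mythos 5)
Your proposal is correct and follows essentially the same route as the paper's proof: both establish that the optimal cost and the cost of the computed $\ka$-clustering on high-dimensional Gaussian data are $\Theta(n d\sigma^2)$, deduce that the threshold is $\Omega(n^\epsilon d\sigma^2)$ while the $\beta$-approximation forces $\ctmp$ to place a center near every Gaussian mean so that every point lies within $O(d\sigma^2)$ (up to the $\log(n/\delta)$ deviation term) of $\ctmp$, conclude that all points are removed in the first round, and then invoke \thmref{main_theorem1} with $\inum=1$. The only notable difference is where the hypothesis on $\epsilon$ enters: the paper uses $n^\epsilon \ge \log(n/\delta)$ precisely so that the threshold dominates the $\log(n/\delta)$ term in the per-point distance bound, whereas you absorb that term into the ``sufficiently large $d$'' assumption; on the other hand, your explicit pigeonhole argument showing that $\ctmp$ must place a center near each mean is a welcome elaboration of a step the paper asserts without detail.
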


This property of \mainalgname\ is contrasted with EIM11, which removes the same fraction of points in each round, regardless of the structure of the data, and so never stops early. 
To compare to \kpar, recall that it has no stopping mechanism and its number of rounds is set heuristically. Moreover, the following theorem, proved in \appref{hard_instance}, shows that there are cases in which \kpar\ requires $k-1$ rounds to get any finite approximation factor, while \mainalgname\ stops after a single round and finds the optimal clustering. 
\begin{theorem} \label{thm:hard_instance}
Let $k \in \nats$. For any $n_0 \in \nats$, there exists a dataset $X$ of size $n \geq n_0$, such that if \kpar\  runs on $X$ for fewer than $k-1$ rounds, then it  does not obtain a finite multiplicative approximation factor, while with probability at least $1-\delta$, \mainalgname\ stops after a single round and returns the optimal clustering. 
\end{theorem}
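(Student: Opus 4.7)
My plan is to place $k$ distinct, heavily duplicated points on the real line at geometrically increasing distances. I would fix the oversampling parameter $\ell$ used by \kpar, and let $B=B(k,\ell)$ be a sufficiently large polynomial in $k\ell$. Define $y_i := B^i$ for $i\in[k]$, and take $X$ to consist of $\lfloor n/k\rfloor$ copies of each $y_i$, with $n$ any multiple of $k$ satisfying $n\geq n_0$ together with the lower bounds required by \thmref{main_theorem1}. Then $|X|\geq n_0$, the only zero-cost $k$-clustering is $\OPT=\{y_1,\ldots,y_k\}$, and any clustering missing some $y_i$ pays at least $(n/k)\min_{i\neq j}(y_i-y_j)^2 > 0$. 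Since $\cost(X,\OPT)=0$, this gap automatically translates any positive-probability failure to cover one of the $y_i$'s into an infinite multiplicative approximation ratio.

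\textbf{Lower bound for \kpar.} I would condition on the uniform seed landing at $y_1$, which has probability $1/k$, and then argue by induction on the round index that with constant probability the candidate set after $r$ rounds covers exactly $\{y_1,y_k,y_{k-1},\ldots,y_{k-r+1}\}$. The key estimates are: (i) for $B$ large, the nearest covered center to every uncovered $y_j$ is $y_1$, so $\rho(y_j,C_r)^2=(1\pm o(1))B^{2j}$; (ii) the potential $\phi(X,C_r)$ is dominated by the farthest uncovered location $y_{k-r}$, with ratio $\approx B^2/k$; and (iii) the expected number of round-$(r+1)$ samples drawn from any non-farthest location $y_j$ is at most $\ell \cdot B^{-2(k-r-j)}$. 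A quantitative choice such as $B^2 \geq 4\ell k^3$ makes the per-round off-target mass at most $1/(2k^2)$, and then a union bound over the $k-2$ rounds preserves the inductive invariant with probability at least $1/2$. This shows that after $k-2$ rounds the candidate set spans at most $k-1$ distinct $y_i$'s, so the subsequent standard weighted $k$-means reduction omits at least one $y_i$ and yields infinite approximation ratio in expectation.

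\textbf{Upper bound for \mainalgname.} For \mainalgname, I would observe that at iteration $1$, $\alpha_1=\eta(\epsilon)/n$, so each $y_i$ appears in each of $P_1,P_2$ with expected multiplicity $\alpha_1\lfloor n/k\rfloor=(1-o(1))\eta(\epsilon)/k=\Theta(n^\epsilon\log(k/(\delta\epsilon)))$. A multiplicative Chernoff bound followed by a union bound over the $2k$ (location, sub-sample) pairs guarantees, with probability at least $1-\delta$, that every $y_i$ appears in both $P_1$ and $P_2$. On this event $\cost(P_1,\OPT_{P_1})=0$, so the $\beta$-approximation guarantee of $\cA$ together with $\ka \geq k$ forces $\{y_1,\ldots,y_k\}\subseteq\ctmp$. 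Hence the truncated cost computed in line~\ref{line:calc_thresh} vanishes, $\thr=0$, every $x\in X$ satisfies $\rho(x,\ctmp)=0$, and line~\ref{line:remove_points} discards all points. The loop terminates after a single round with $\coutput\supseteq\{y_1,\ldots,y_k\}$, which is optimal.

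\textbf{Main obstacle.} The subtle step is the inductive concentration argument for \kpar: between consecutive rounds the dominating potential $\phi(X,C_r)$ shrinks by a factor $\approx B^2$, so the per-point sampling probability into the non-dominant clusters effectively grows as the process advances, and I must keep the per-round off-target probability below $1/k$ simultaneously for all $k-2$ rounds. Doing so cleanly requires a delicate polynomial choice of $B$ in $k$ and $\ell$, together with careful tracking of the $(1\pm o(1))$ corrections to the distances $|y_j-y_{j'}|$ near already-covered centers; the rest of the argument (the Chernoff step for \mainalgname\ and the weighted-reduction step for \kpar) is standard.
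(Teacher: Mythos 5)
Your proof is correct in its essentials, but it takes a genuinely different route from the paper for the \kpar\ lower bound. The paper does not construct or analyze a hard instance for \kpar\ from scratch: it imports the known hard instance of \citet[Theorem 2]{bachem2017distributed} --- $k$ distinct points where $x_1$ has $k-1$ copies and $x_2,\ldots,x_k$ are singletons --- duplicates it $z=\ceil{n_0/(2k-2)}$ times to reach size $n\geq n_0$, and simply observes that the round lower bound survives duplication; all of the probabilistic work for \kpar\ is delegated to the cited proof. You instead build a balanced instance with $n/k$ copies at each of $k$ geometrically spaced locations $y_i=B^i$ and run your own inductive concentration argument (condition on the seed at $y_1$, show the potential is dominated by the farthest uncovered location, bound the off-target sampling mass by $\ell B^{-2}$ per location). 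This is self-contained and makes the mechanism of the failure transparent, at the cost of the delicate bookkeeping you flag; two small points to tighten are (i) your invariant should be that the candidate set covers \emph{at most} $\{y_1,y_k,\ldots,y_{k-r+1}\}$ rather than exactly that set, since the farthest location is only sampled with probability $1-e^{-\Omega(\ell)}$ in each round (the weaker invariant is all you need and is what your estimates actually give), and (ii) your $B$ depends on $\ell$, so the instance is tailored to a fixed oversampling parameter, whereas the paper's phrasing claims the cited instance works for any $\ell$. The \mainalgname\ half of your argument --- every distinct location lands in $P_1$ with probability $1-\delta$ by Chernoff plus a union bound, hence $\cost(P_1,\ctmp)=0$, hence $\thr=0$ and every point is removed in line \ref{line:remove_points} --- is essentially identical to the paper's (the paper bounds $\P[x_i\notin P_1]\leq(1-\alpha)^z\leq\delta/k$ directly; your additional requirement that each $y_i$ also appear in $P_2$ is harmless but unnecessary, since $\ctmp\supseteq\{y_1,\ldots,y_k\}$ already forces the truncated cost on $P_2$ to vanish).
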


The experiments below demonstrate that also in practice, in many cases \mainalgname\ requires few rounds.

\section{Experiments} \label{sec:experiments} 

We report experiments on synthetic and real datasets. The code is provided at \url{https://github.com/selotape/distributed_k_means}. 
The experiments were performed on a single multi-core machine with a standard Intel processor,  which ran the code of the coordinator and of all the machines. 
We could not run EIM11 \citep{ene2011fast} on these datasets, since, as explained in \secref{main_structure}, in this algorithm the coordinator broadcasts a very large number of points to the machines. Since each machine is required to calculate the distance from each of its data points to the broadcasted set of points, this leads to a very large machine running time. For instance, for $k=100$, $n=10^7$, and $\epsilon=0.1$, the coordinator broadcasts $72,\!000$ points to the machines in each round, compared to about $200$ points sent by \mainalgname\ and \kpar. As a result, the  machine running time of EIM11 is more than a hundred-fold larger, making this algorithm far from competitive in terms of machine run time, and impractical to run in our environment.

 Both \mainalgname\ and \kpar\ output more than $k$ centers. The output $k$ clustering was calculated using the standard weighted $k$-means approach described in \secref{related}, 
 using the $k$-means algorithm of python's \texttt{scikit-learn} \citep{scikit-learn}, which was also used as our centralized black-box $k$-means algorithm for the intermediate clustering calculations of the coordinator in \mainalgname. 
To reduce variance, we fixed the sample sizes $P_1$ and $P_2$ to be exactly an $\alpha$ fraction of the current data. The parameter $l$ of \kpar, which determines the number of points to select in each round, was set to $2k$, as in \citet{bahmani2012scalable} and in the default setting of \texttt{MLLib} \citep{meng2016mllib}.  We calculated $k$-means clusterings using each of the two algorithms, for several values of $k$, on both synthetic and real datasets. The properties of the tested datasets are listed in \tabref{datasets}.

\begin{table}[h]
  \begin{center}
 \captionof{table}{Properties of datasets}
 \label{tab:datasets}
 \begin{tabular}{|l|l|l|}
 \hline
 Dataset               & \# points & Dim.   \\ \hline
 $k$-Gaussian Mixture & 10M   & 15           \\ \hline
 Higgs               & 11M   & 28        \\ \hline
 Census1990          & 2.45M & 68           \\ \hline
 KDDCup1999          & 4.8M  & 42          \\ \hline
 BigCross             & 11.6M & 57         \\ \hline
 \end{tabular}
\end{center}
\end{table}

For \mainalgname, we set $\delta = 0.1$ in all the experiments, and tested several values of $\epsilon$. For \kpar, we tested stopping after each round between $1$ and $5$. Each experiment was repeated $10$ times; we report the average of each result. Standard deviations (reported in \appref{def_exp}) were usually smaller than $2\%$ of the reported mean.

 First, we generated for each tested $k$ a synthetic dataset drawn from a $k$-Gaussian mixture in $\reals^{15}$. The mean of each Gaussian was randomly drawn from the unit cube in $\reals^{15}$, and all Gaussian were all set to be spherical with isotropic variance $\sigma=0.001$. The weight distribution of the Gaussians in the mixture was set according to the Zipf distribution, proportionally to $i^\gamma$, where $\gamma=1.5$. Each dataset consisted of ten million points drawn from this distribution. We provide the code and seed for generating these datasets at \url{https://github.com/selotape/distributed_k_means}. We then tested the algorithms on four real-world datasets with millions of points, which were used in previous papers studying similar settings: \texttt{HIGGS}, \texttt{KDDCup1999} \citep{baldi2014searching} and \texttt{Census1990}, all from the UCI repository \citep{Dua:2019}, and \texttt{Bigcross} \citep{ackermann2012streamkm++}.

\begin{table*}[t]
\begin{small}
\begin{center}
\caption{Some of the experiment results (See \appref{def_exp} for full results). 
  Top: Comparing \mainalgname\ and \kpar\ when each is running a single round.
  Bottom: \kpar\ results for 2 and 5 rounds. The factors in parenthesis for \kpar\ results provide the ratio between the \kpar\ cost or time to the corresponding values of \mainalgname.}
\vspace{0.5em}
\label{tab:experiments_summary}
\begin{tabular}{ll|llll|ll}
                      &     & \multicolumn{4}{c}{\mainalgname, one round}                    & \multicolumn{2}{c}{\kpar, one round} \\ 
Dataset & $k$   & $\epsilon$ & $|P_1|$     & Cost                     & T (seconds)         & Cost                     & T          (seconds) \\ \hlineB{2}
 Gau                     & 25  & 0.05    & 11,316  & 150                   & 0.37 & 168 $\cdot 10^3$ (x6,340)  & 0.05 (x0.14)     \\ 
                      & 100 & 0.05    & 56,440  & 150                  & 0.68 & 1,079  $\cdot 10^3$ (x1,773) & 0.05 (x0.07)     \\ \hlineB{2}
Hig                   & 25  & 0.1     & 25,335  & 144   $\cdot 10^6$   & 0.32 & 171  $\cdot 10^6$ (x1.19) & 0.05 (x0.16)    \\
                      & 100 & 0.05    & 56,440  & 122   $\cdot 10^6$   & 0.48   & 137$\cdot 10^6$   (x1.12) & 0.06 (x0.12) \\ \hlineB{2}
Cen & 25  & 0.1  & 22,018  & 188 $\cdot 10^6$     & 0.09 & 418    $\cdot 10^6$  (x2.22)    & 0.05 (x0.56)   \\
                      & 100 & 0.1     & 109,813 & 132  $\cdot 10^6$ & 0.13 & 264   $\cdot 10^6$ (x2) & 0.05 (x0.38)     \\ \hlineB{2}
 KDD & 25  & 0.2  & 110,088 & 112    $\cdot 10^{12}$ & 0.15 & 254   $\cdot 10^{12}$ (x2.08)   & 0.06 (x0.4) \\
    & 100 & 0.2  & 549,037 & 743   $\cdot 10^{10}$     & 0.26 & 5,175  $\cdot 10^{10}$ (x6.97) & 0.06 (x0.23)  \\ \hlineB{2}
 Big & 25  & 0.1  & 25,335  & 332   $\cdot 10^{10}$   & 0.38 & 519  $\cdot 10^{10}$ (x1.56)    & 0.18 (x0.47) \\
    & 100 & 0.1 & 126,354  & 152   $\cdot 10^{10}$   & 0.53 & 241    $\cdot 10^{10}$  (x1.86)  & 0.18 (x0.34)
\end{tabular}

\vspace{0.5em}
\begin{tabular}{ll|ll|ll}
     &         & \multicolumn{2}{c}{\kpar,  2 rounds}      & \multicolumn{2}{c}{\kpar,  5 rounds}       \\
Dataset & $k$   & Cost                       & T (seconds)           & Cost                       & T (seconds) \\ \hlineB{2}
\multirow{2}{*}{Gau}  & 25  & 37,350 (x246)              & 0.33 (x0.89)  & 164 (x1.1)                 & 1.98 (x5.35)  \\
     & 100 & 25,866 (x172)              & 1.09 (x1.6)   & 167 (x1.1)                 & 7.09 (x10.4)  \\ \hlineB{2}
\multirow{2}{*}{Hig} & 25  & 153 $\cdot 10^6$ (x1.06)     & 0.31 (x0.96) & 139 $\cdot 10^6$ (x1.06)     & 1.59 (x4.96)  \\
     & 100    & 125 $\cdot 10^6$ (x1.06)   & 0.85 (x1.77)  & 115  $\cdot 10^6$  (x0.94) & 5.62 (x11.7)  \\ \hlineB{2}
\multirow{2}{*}{Cen} & 25    & 218  $\cdot 10^6$  (x1.15)   & 0.15 (x1.66) & 185  $\cdot 10^6$  (x0.98)   & 0.6 (x6.66)   \\
     & 100   & 133  $\cdot 10^6$  (x1)    & 0.31 (x2.38) & 109  $\cdot 10^6$  (x0.82) & 1.66 (x12.76) \\ \hlineB{2}
\multirow{2}{*}{KDD} & 25   & 157  $\cdot 10^{12}$  (x1.4)   & 0.23 (x1.53) & 126 $\cdot 10^{12}$  (x1.12)   & 1.03 (x6.86)  \\
     & 100   & 649 $\cdot 10^{10}$  (x0.87) & 0.54 (x2.07) & 795 $\cdot 10^{10}$  (x1.07) & 3.05 (x11.73) \\ \hlineB{2}
\multirow{2}{*}{Big} & 25   & 519  $\cdot 10^{10}$ (x1.66) & 0.18 (x0.47) & 330  $\cdot 10^{10}$ (x0.99) & 2.13 (x5.6)   \\
                     & 100  & 169  $\cdot 10^{10}$ (x1.11) & 1.09 (x2.06) & 150  $\cdot 10^{10}$ (x0.99) & 6.17 (x11.64)
\end{tabular}
\end{center}
\end{small}
\end{table*}

\tabref{experiments_summary} provides some of the results of running the algorithms on the each of the datasets. Results of \mainalgname\ for all values of $\epsilon$ and for \kpar\ after all rounds between $1$ and $5$ are reported in full in \appref{kmeans_results}.
In \tabref{experiments_summary} (Top), we report the value of $\epsilon$ and the induced coordinator clustering size $|P_1|$ that resulted in \mainalgname\ stopping after a single round, and provide the obtained cost and the machine running time of \mainalgname\ and of \kpar\ after a single round. This provides a direct comparison with the same number of rounds. In \tabref{experiments_summary} (Bottom), we report the results of \kpar\ for the same experiments after two and five rounds, for comparison to \mainalgname\ after a single round. The reported machine running time was calculated by taking the sum, over all rounds, of the maximal machine running time in each round based on $50$ machines.
The   communication complexity of \mainalgname\ per round is $2|P_1|$. The communication complexity of \kpar\ per round is $l=2k$. While for large $\epsilon$, the total communication is much larger in \mainalgname, the average communication complexity per machine in \mainalgname\ is smaller, since it is $2|P_1|$ divided by the number of machines.

For the $k$-Gaussian mixtures, the first two rows of \tabref{experiments_summary} (Top) show that when the coordinator is allowed to cluster $|P_1| \approx 500\cdot k$ points, \mainalgname\ stops after a single round. In comparison, when stopping \kpar\ after one round, its resulting clustering cost is three orders of magnitude larger than that of \mainalgname. As can be seen in \tabref{experiments_summary} (Bottom), even after five rounds, the cost obtained by \kpar\ is still somewhat larger than the one obtained in one round by \mainalgname, at which point the machine running time is also larger than that of \mainalgname. For all the coordinator sizes that we tested (see \appref{kmeans_results}), the output cost of \mainalgname\ for the Gaussian mixtures was almost identical (and approximately optimal) regardless of coordinator sizes, which only affected the number of rounds.

For the other datasets, it can be seen that the cost obtained by \mainalgname\ after its single round is lower than that obtained by \kpar\ after one round, and almost always also after two rounds. In addition, the machine running time of \mainalgname\ after one round is usually significantly smaller than that of \kpar\ after running the number of rounds necessary to obtain a comparable cost.

In all of our experiments, \mainalgname\ stopped after a smaller number of rounds than the worst-case guarantee of $1/\epsilon-1$. In particular,
\tabref{dataset_small_ep} reports  experiments in which $\epsilon = 0.01$ and so the coordinator size was very small (see \appref{kmeans_results} for other values of $\epsilon$). In this case, the worst-case number of rounds is $99$, while the true number of rounds was usually between $2$ and $4$. Even with this small coordinator size, the number of rounds required by \kpar\ to obtain a comparable cost was usually much larger, as can be seen by comparing to the two rightmost columns in \tabref{dataset_small_ep}. The machine running time in \kpar\ was also usually significantly larger. Note that unlike \kpar, in which each round requires the same running time, in \mainalgname\ each additional round is considerably faster, due to the removal of points. 
Regarding the dependence of the cost upper bound on the total number of rounds of \mainalgname\ in \thmref{main_theorem1}, it can be seen in \appref{kmeans_results} that in practice the cost is similar for the same dataset for different coordinator sizes, although they each lead to a different total number of rounds.

\begin{table}[h]
\begin{small}
\begin{center}
    \captionof{table}{Results of experiments with $\epsilon=0.01$. `R' of \mainalgname\ gives the number of rounds it required. \kpar\ was run until a cost that is up to $2\%$ from that of \mainalgname.}

\label{tab:dataset_small_ep}
\scalebox{0.9}{
\begin{tabular}{ll|llll|cc}
                  &     & \multicolumn{4}{c}{\mainalgname, $\epsilon=0.01$}                    & \multicolumn{2}{c}{\kpar} \\ 
Data                                        & $k$   & $|P_1|$    & R     & Cost  & T & R & T \\  \hlineB{2}
Gau & 25  & 6,000  & 3     & 150                         & 0.72    & 15 &     12.3             \\ 
                    & 100 & 30,000 & 2     & 150  & 0.95    & 15 &    47              \\ \hlineB{2}
Hig                                      & 25  & 6,000  & 3     & 134 $\,\cdot 10^6$  & 0.6  & 8 & 3.8  \\  
                                         & 100 & 30,000 & 2    & 120 $\,\cdot 10^6$    & 0.68  & 3& 2.1  \\ \hlineB{2}
Cen                                      & 25  & 6,000  & 4     & 176  $\,\cdot 10^6$     & 0.2& 8  & 1.3 \\  
                                         & 100 & 30,000 & 3     & 110 $\,\cdot 10^6$   & 0.3  & 5 & 1.7 \\ \hlineB{2}
KDD                                      & 25  & 6,000  & 11  & 114$\,\cdot 10^{12}$       & 1 & 10&                     3.3   \\  
                                         & 100 & 30,000 & 7    & 597   $\,\cdot 10^{10}$     & 1.1  & 10 & 10.9                   \\ \hlineB{2}
Big                                & 25  & 6,000  & 3      & 319   $\,\cdot 10^{10}$  & 0.9 & 8& 5.1  \\ 
                                         & 100 & 31,000 & 2-3      & 154    $\,\cdot 10^{10}$ & 0.93  & 4 & 3.9                        \\ 
\end{tabular}}

\end{center}
\end{small}

\end{table}

We conclude that overall, if the coordinator is allowed to calculate a clustering for a moderate number of points, \mainalgname\ usually stops after a small number of rounds, and obtains a comparable or better cost than \kpar, even if the latter runs for a larger number of rounds. In addition, \mainalgname\ requires significantly less machine running time to achieve a comparable cost.

Our run time comparison focuses on machine running times, showing that in this respect \mainalgname\ is considerably faster. Our premise is that the coordinator is significantly stronger, and its computation time is not a bottleneck. However, one may still be interested in reducing the coordinator running time as well. 
To speed up the coordinator, a faster clustering implementation can be used. However, existing fast implementations are typically less successful on  more difficult datasets. We demonstrate this approach by replacing the black-box k-means implementation used for the experiments above with 
the faster \texttt{MiniBatchKMeans} implementation from \texttt{scikit-learn}. The results, reported in \appref{mini_batch_results}, show that in almost all experiments, \mainalgname\ obtains a similar cost to \kpar\ with a comparable total running time and fewer rounds. A notable exception is the \texttt{KDDCup1999} dataset. In this dataset, \texttt{MiniBatchKMeans} fails to find a clustering with a reasonable cost, even when running on the entire set of points. We believe this is because this dataset includes many outliers \cite{tavallaee2009detailed}, which are not well handled by this implementation.
This highlights the importance of using a black box that is suitable for the task at hand.

\section{Conclusion} \label{sec:conclusion}
In this work, we presented a new distributed $k$-means clustering algorithm that can require as little as one or two communication rounds, and stops on its own without having to specify the number of rounds as a parameter. Given a restriction on the maximal number of points that can be clustered by the coordinator using a centralized $k$-means algorithm, our algorithm obtains a constant approximation factor, as well as a constant upper bound  on the number of rounds. Our experiments demonstrate its effectiveness on various datasets, where it usually obtains a smaller cost than \kpar\ using fewer rounds.
We believe that the techniques used in \mainalgname\ can further be used to support robustness against outliers and machine failures, and we intend to study these challenges in future work. 

\subsubsection*{Acknowledgements}
This work was supported by the Lynn and Williams Frankel Center for Computer Science at Ben-Gurion University.

\bibliography{Mybib}
\bibliographystyle{plainnat}

\newpage
\newpage
\onecolumn
\appendix

\section{Appendix}

\subsection{Proof of \lemref{main_lemma}} \label{app:main_lemma}

In this section, we give the proof of \lemref{main_lemma}. 
We first give an auxiliary lemma. This lemma is a variation on results that were proved in \citet{hess2021constant} (henceforth abbreviated to HMS21),
 where the latter have significantly larger constants.  An additional difference is that HMS21 proved the results for $k$-median. The adaptation to $k$-means is straightforward, but affects some constants. 

\newcommand{\hkdelta}{\kdelta'}
\newcommand{\hka}{k'_+}

Denote a $k$-means solution from $X$ which is optimal for some subset $Y \subseteq X$ by the notation $\OPT_Y^X := \min_{T \subseteq X, |T| \leq k}\cost(Y,T)$. Consider the optimal clusters induced by $\OPT_Y^X$ on $Y$. HMS21 defines small optimal clusters as those optimal clusters which include at most $150\log(32k/(\delta))/\alpha$ points. 
In order to obtain guarantees with smaller constants, we use a variant of this definition. Recall the notation $\kdelta:= 6.5\log (1.1k/(\delta\epsilon) )$, $\ka := k + 9\log(1.1k/(\delta\epsilon))$. Denote $\hkdelta := 6.5\log (1.1k/\delta), \hka := k + 9\log(1.1k/\delta)$, which are equal to $\kdelta,\ka$ with $\epsilon = 1$. Let $F_\alpha(Y)$ be the $\hkdelta/\alpha$ points in $Y$ that are furthest from $\OPT_Y^X$. Define small optimal clusters to be those that of size at most $\hkdelta/\alpha$ after removing the points in $F_\alpha(Y)$. Any larger cluster is called a \emph{large optimal cluster.} 
Denote by $\smallp(Y)$ the set of points in $Y$ that belong to small optimal clusters in $Y$,  and its complement by $\largep(Y):= Y \setminus \smallp$. 
The following lemma provides results that are adaptations of results from HMS21, where latter have larger constants and hold for the original definition of small optimal clusters.  
\begin{lemma}[Adaptation of results from HMS21] \label{lem:hess_lemma}
Let $Y \subseteq X$. Let $\alpha, \delta \in (0,1)$ and set $\hka,\hkdelta$ as defined in \algref{main_algorithm}. Let $P_1, P_2 \subseteq Y$ be two independent samples of size $\alpha |Y|$, selected uniformly at random from $Y$. Let $\cA$ be a $\beta$-approximation $k$-means algorithm, and define $T:=\cA(P_1,\hka)$ .
With probability at least $1-\delta$, 
\begin{enumerate}
\item $\cost(\largep(Y)\setminus F_\alpha(Y),T) \leq (36\beta + 20) \cost(Y,\OPT)$; 
  \label{part57}

\item $\eo := \frac{2}{3 \alpha } \cost_{\frac{3}{2} (k+1) \hkdelta  }(P_2,T) \leq \cost_{(k+1)\hkdelta/\alpha}(Y,T)$; 
  \label{part58}

\item $|\{x \in Y \mid \rho(x,T)^2>\eo\alpha/(k\hkdelta) \}| \leq  5.5 k \hkdelta/\alpha$.\label{part59}  
\end{enumerate}
\end{lemma}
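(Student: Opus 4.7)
The plan is to re-execute the three-step HMS21 argument with the sharpened definition of ``small optimal cluster'' (using threshold $\hkdelta/\alpha$ instead of HMS21's larger constant) and with triangle-inequality steps adapted to squared-distance $k$-means cost. Since all three parts share the same source of randomness ($P_1$ and $P_2$, drawn uniformly from $Y$), I would prove each of them to hold with probability at least $1-\delta/3$ and then union-bound. Throughout, the concentration inputs are standard Chernoff/Bernstein bounds, so the novelty lies in tuning the offsets inside $\hkdelta = 6.5\log(1.1k/\delta)$ and $\hka = k + 9\log(1.1k/\delta)$ so that the resulting constants match the lemma statement exactly.

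For Part~\ref{part57}, the structural step is that each large optimal cluster $C$ has at least $\hkdelta/\alpha$ points outside $F_\alpha(Y)$, so $\E[|P_1 \cap C|] \geq \hkdelta$. A Chernoff lower tail combined with a union bound over at most $k$ large clusters gives $|P_1 \cap C| \geq 1$ simultaneously for every large cluster, exploiting the $9\log(1.1k/\delta)$ slack in $\hka$ and the $\log(1.1k/\delta)$ factor inside $\hkdelta$. I would then consider a candidate $\hka$-clustering on $P_1$ built from $\OPT_Y^X$ (contributing at most $k$ centers) plus one sampled representative per large cluster (at most $k$ more centers, which fits since $\hka - k \geq k$). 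The $\beta$-approximation property of $\cA$ gives $\cost(P_1, T) \leq \beta \cdot \cost(P_1, \text{candidate})$, and a second Chernoff argument transfers this back to $Y$. In the $k$-means adaptation, each triangle-inequality step picks up a factor of two through $\rho(x,y)^2 \leq 2\rho(x,z)^2 + 2\rho(z,y)^2$, and carefully tracking those factors yields the $(36\beta + 20)$ coefficient.

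For Parts~\ref{part58} and~\ref{part59}, fix $T$ (which depends only on $P_1$) and use the independence of $P_2$. For Part~\ref{part58}, I would apply a Bernstein-style concentration bound to the random sum $\sum_{x \in P_2}\rho(x,T)^2$ after truncating the top $\tfrac{3}{2}(k+1)\hkdelta$ contributors on the $P_2$ side and the top $(k+1)\hkdelta/\alpha$ on the $Y$ side: the $3/2$ slack in the truncation count absorbs Chernoff deviations in the random \emph{count} of heavy points, while the prefactor $2/(3\alpha)$ rescales the expected sample fraction back to the population. For Part~\ref{part59}, I would argue by contradiction: if $L := |\{x \in Y : \rho(x,T)^2 > \eo\alpha/(k\hkdelta)\}| > 5.5k\hkdelta/\alpha$, then $\E[|P_2 \cap \{\text{heavy}\}|] = L\alpha > 5.5k\hkdelta$, and a Chernoff lower tail (again calibrated by the $\hkdelta$ offset) would keep more than $\tfrac{3}{2}(k+1)\hkdelta + \tfrac{3}{2}k\hkdelta$ heavy points in $P_2$. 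By the definition of truncated cost, even after discarding the top $\tfrac{3}{2}(k+1)\hkdelta$ of them, the remainder would force $\cost_{\frac{3}{2}(k+1)\hkdelta}(P_2, T) > \tfrac{3\alpha}{2}\eo$, contradicting the definition of $\eo$.

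The main obstacle is the tight bookkeeping: the three parts interlock through $\hkdelta$ and $\hka$, so any looseness in one Chernoff step inflates the constants elsewhere. In particular, the $(36\beta+20)$ factor in Part~\ref{part57}, the $3/2$ slack in Part~\ref{part58}, and the $5.5$ coefficient in Part~\ref{part59} all trade off against the precise offsets in $\hkdelta$ and $\hka$, and against the factor of $2$ introduced by the squared-distance triangle inequality in the $k$-means adaptation. Sharpening HMS21's proof therefore requires carrying lower-order terms through the Chernoff computations rather than discarding them, and re-balancing the $P_2$ and $Y$ truncation thresholds so that the $k$-means doubling does not blow up any constant in the final statement.
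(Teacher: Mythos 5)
Your overall plan does follow the same route as the paper, which itself only sketches this lemma as an adaptation of Lemmas 5.7--5.9 of HMS21 (sampled representatives for large clusters, a Chernoff/union-bound over the $\leq k$ large clusters, the weak triangle inequality $\rho(x,y)^2 \leq 2\rho(x,z)^2+2\rho(z,y)^2$ doubling the $k$-median constants, and the contradiction argument for Part~3, which you reproduce correctly). However, your argument for Part~\ref{part57} contains a step that fails. You justify a candidate clustering consisting of the centers of $\OPT_Y^X$ \emph{plus} one representative per large cluster (up to $2k$ centers in total) by asserting that $\hka - k \geq k$. But $\hka - k = 9\log(1.1k/\delta)$, which is smaller than $k$ already for moderate $k$ (e.g.\ $k=100$, $\delta=0.1$ gives $9\log(1100)\approx 63$), so a $2k$-center candidate is not a feasible $\hka$-clustering and cannot be compared against $\cost(P_1,\cA(P_1,\hka))$ via the $\beta$-approximation guarantee. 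A second, related problem is that the benchmark for $\cA$ on $P_1$ is $\OPT_{P_1}$, whose centers are restricted to points \emph{of $P_1$}; the centers of $\OPT_Y^X$ need not lie in $P_1$, so they cannot be placed in the candidate at all without further argument.

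The repair is the standard one (and the one implicit in HMS21): the candidate consists \emph{only} of one sampled representative per large cluster, hence at most $k \leq \hka$ centers, all guaranteed to lie in $P_1$. The price is that $\cost(P_1,\text{candidate})$ must then be related to $\cost(P_1,\OPT)$ through the weak triangle inequality, routing each point via its optimal center to the representative; this is exactly where an extra factor (and the additive term involving the representatives' own distances to the optimal centers, controlled by ensuring the representative is a ``good'' sample, e.g.\ below a cost quantile of its cluster) enters the constant $36\beta+20$. The extra $9\log(1.1k/\delta)$ centers in $\hka$ are not there to accommodate a doubled candidate, so your accounting of where that slack is spent needs to be redone; as written, Part~\ref{part57} does not go through.
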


\begin{proof}[Proof Sketch]
  The lemma is derived by adapting results from HMS21 to our setting. The adaptation is consists of following the same proofs with minor technical differences; we give a sketch of the differences below.
  
  The three parts of the lemma are derived by adapting lemmas 5.7, 5.8, and 5.9 of HMS21 to our setting.
The original claim in Lemma 5.7 is proved for $\OPT^X_{Y} := \min_{T \subseteq X, |T| \leq k}\cost(Y,T)$, however it is easy to see that $\cost(Y,\OPT_Y^X) \leq \cost(Y,\OPT)$. In addition, the original claim does not subtract $F_\alpha(Y)$ on the LHS. This subtraction allows us to get improved final constants. 
  Lemma 5.9 gives the claim in part \ref{part59} for $Y \setminus (P_1\cup P_2)$, while in our case it holds for $Y$ (with a different constant). This is because in our case, $P_1$ and $P_2$ are independent samples, while in HMS21 they are non-overlapping.
  
  The main differences between the original lemmas and the version we give here are in the definition of large clusters and in the resulting constants.
In particular, HMS21 provided guarantees for  $\hkdelta=150\log(\frac{32k}{\delta})$ and $\hka = k+38\log(\frac{32k}{\delta})$. In the current work, we define $\hkdelta = 6.5\log(\frac{1.1k}{\delta})$ and  $\hka = k+9\log (\frac{1.1k}{\delta})$. In addition, as described above, our definition of small optimal clusters ignores the points in $F_\alpha(Y)$.
In addition to the new definition of small clusters and a tightening of the constants in the analysis, the improved constants are also due to the fact that unlike HMS21, we require fewer events to hold. For instance, we do not require the optimal points to be outside of $P_1$ and $P_2$. This allows reducing the factor in the $\log$ in the definition of $\hkdelta$.  

  The constant factor is further improved by using the tighter version of the multiplicative Chernoff bound \citep{motwani1996randomized}  to tighten the constants in Lemma 5.4 of HMS21.
Reducing the constants in $\hkdelta$ leads to a reduction in other constants as well, including those in $\hka$. An additional improvement in constants stems by assuming that $n$ is not too small, which allows avoiding certain edge cases. In particular, this allows improving the constants in the guarantees provided in HMS21 for linear bin divisions, and these affect the final result.

The approximation factor of Lemma 5.7 in HMS21 is $18\beta+10$. This factor is reduced to $9\beta +5.5$ for $k$-medians using the techniques above. For $k$-means, the triangle inequality used in several places in that proof needs to be replaced by the weak triangle inequality, leading to a final approximation factor of $36\beta + 20$.
\end{proof}

The following corollary is immediate, by applying the lemma above to the intermediate calculations in \mainalgname, and replacing $\delta,\kdelta,\ka$ by $\delta\epsilon,\hkdelta,\hka$, respectively. For simplicity, we take the sizes of $P_1$ and $P_2$ in \mainalgname\ to be exactly an $\alpha$ fraction of $V_i$. For the independent sampling mechanism of $\{P_j^l\}$  used  in \algref{main_algorithm}, this holds in expectation, and with a high probability for large data sizes, up to a negligible correction. It can also be enforced exactly and for all dataset sizes, by letting the coordinator set the number of sample points that each machine should send, based on a draw from the relevant multinomial distribution. However, since this would have a negligible effect in most cases, and makes the algorithm unnecessarily more complicated, we chose to present the simpler mechanism in \algref{main_algorithm}.

\begin{cor} \label{cor:cor-lem}
Assume that \mainalgname\ runs with the parameters given in \thmref{main_theorem1}.  Let $\alpha_i, \ctmp^i,\thr_i$ be the respective values of $\alpha, \ctmp,\eo,\thr$ calculated at iteration $i$ of \mainalgname. Let $\eo_i = \thr_ik\kdelta/\alpha_i$. Let $V_i$ be the remaining dataset at the beginning of round $i$ of \mainalgname.

With probability at least $1-\delta\epsilon$, 
\begin{itemize}
\item $\cost(\largep(V_i)\setminus F_{\alpha_i}(V_i),\ctmp^i)\leq (36\beta+20)\cost(V_i,\OPT);$
\item $\eo_i \leq \cost_{(k+1)\kdelta/\alpha_i}(V_i,\ctmp^i)$;

\item $|V_{i+1}| \leq  5.5 k\kdelta/\alpha_i$.
\end{itemize}
\end{cor}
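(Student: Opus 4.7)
\textbf{Proof plan for Corollary~\ref{cor:cor-lem}.}
The plan is to derive the corollary as a direct instantiation of Lemma~\ref{lem:hess_lemma}, with its parameters matched to what \mainalgname\ uses at iteration $i$. I will invoke the lemma with $Y := V_i$, $\alpha := \alpha_i$, and with the lemma's confidence parameter set to $\delta\epsilon$ in place of $\delta$. Since $\hkdelta = 6.5\log(1.1k/\delta)$ and $\hka = k + 9\log(1.1k/\delta)$ are defined as functions of $\delta$, this single substitution turns $\hkdelta$ into $\kdelta$ and $\hka$ into $\ka$ exactly as they appear in \mainalgname, and the failure probability becomes $\delta\epsilon$, matching the corollary.

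Before applying the lemma I will verify its sampling hypotheses, namely that $P_1$ and $P_2$ are two independent uniform samples of size $\alpha_i |V_i|$ drawn from $V_i$. In \mainalgname\ (line~4), each machine $j$ adds each of its points to each of $P_j^1$ and $P_j^2$ independently with probability $\alpha_i$, and these coin flips are independent across machines and across $l \in \{1,2\}$. Hence each $P_l = \cup_j P_j^l$ is a Bernoulli-$\alpha_i$ sample of $V_i$, and the two are mutually independent. Following the note preceding the corollary, I will assume the sample sizes are exactly $\alpha_i|V_i|$, which, conditional on size, coincides with uniform sampling from $V_i$.

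Having invoked the lemma, I will read off each bullet of the corollary from the corresponding part. Since the $T := \cA(P_1, \hka)$ of the lemma equals $\ctmp^i = \cA(P_1, \ka)$ under the substitution, Part~1 is the first bullet verbatim. For the second bullet, Part~2 gives $\eo := \frac{2}{3\alpha_i}\cost_{\frac{3}{2}(k+1)\kdelta}(P_2, \ctmp^i) \leq \cost_{(k+1)\kdelta/\alpha_i}(V_i, \ctmp^i)$; substituting the definition $\thr_i = 2\cost_{\frac{3}{2}(k+1)\kdelta}(P_2, \ctmp^i)/(3k\kdelta)$ from line~\ref{line:calc_thresh} yields $\eo = \thr_i \cdot k\kdelta/\alpha_i$, which is exactly the $\eo_i$ defined in the corollary, so the claimed inequality follows. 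For the third bullet, Part~3 bounds the size of $\{x \in V_i : \rho(x,\ctmp^i)^2 > \eo\alpha_i/(k\kdelta)\}$ by $5.5 k\kdelta/\alpha_i$; plugging in $\eo = \eo_i = \thr_i k\kdelta/\alpha_i$ makes the threshold $\eo\alpha_i/(k\kdelta)$ collapse to exactly $\thr_i$, so the set in question is $\{x \in V_i : \rho(x,\ctmp^i)^2 > \thr_i\}$, which by the removal rule in line~\ref{line:remove_points} is precisely $V_{i+1}$.

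The work is essentially bookkeeping; no new probabilistic argument is required beyond Lemma~\ref{lem:hess_lemma}. The two places to be careful are (i) keeping the substitution $\delta \mapsto \delta\epsilon$ consistent throughout, so that the $1-\delta$ of the lemma translates to the $1-\delta\epsilon$ of the corollary and so that $\hkdelta, \hka$ correctly become $\kdelta, \ka$; and (ii) matching the algorithm's threshold $\thr_i$ with the lemma's internal quantity $\eo$ through the identity $\eo_i = \thr_i k\kdelta/\alpha_i$, which is what causes the three bullets of the corollary to line up exactly with the three parts of the lemma.
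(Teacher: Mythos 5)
Your proposal is correct and follows exactly the route the paper takes: the paper likewise obtains Corollary~\ref{cor:cor-lem} as an immediate instantiation of Lemma~\ref{lem:hess_lemma} with $Y = V_i$, $\alpha = \alpha_i$, and $\delta$ replaced by $\delta\epsilon$ (so that $\hkdelta,\hka$ become $\kdelta,\ka$), together with the same simplifying convention that $|P_1| = |P_2| = \alpha_i|V_i|$ exactly. Your explicit bookkeeping of the identity $\eo_i = \thr_i k\kdelta/\alpha_i$ and the resulting identification of the set in part~3 with $V_{i+1}$ is just a more detailed spelling-out of what the paper leaves implicit.
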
 
We now use this corollary to prove \lemref{main_lemma}.
\begin{proof} [Proof of \lemref{main_lemma}]
  To prove the first part of the lemma, we separately bound the cost of $R_i \cap (\largep(V_i)\setminus F_{\alpha_i}(V_i))$ and $R_i \cap (\smallp(V_i) \cup  F_{\alpha_i}(V_i))$ with respect to $\ctmp^i$. 
     For the first part, the bound follows from part 1 of \corref{cor-lem}, since 
\begin{align} 
&\cost(R_i \cap (\largep(V_i)\setminus F_{\alpha_i}(V_i)),\ctmp^i) \notag \\
&\leq \cost(\largep(V_i)\setminus F_{\alpha_i}(V_i),\ctmp^i) \notag \\
&\leq (36\beta + 20) \cost(V_i,\OPT). \label{eq:eq_main_lemma_1}
\end{align}
Next, we consider the second part.
Note that by the definition of small clusters, $|\smallp(V_i)\cup F_{\alpha_i}(V_i)|\leq (k+1)\kdelta/\alpha_i$. Hence, we get 
\begin{align*}
&\cost_{(k+1)\kdelta/\alpha_i}(V_i,\ctmp^i) \\
&\leq  \cost(V_i \setminus (\smallp(V_i)\cup F_{\alpha_i}(V_i)),\ctmp^i)\\
&=\cost(\largep(V_i)\setminus F_{\alpha_i}(V_i),\ctmp^i).
\end{align*}
Hence, by combing the above equation with the first and second parts of \corref{cor-lem}, we get that 
\begin{align*}
\eo_i \leq (36\beta + 20) \cost(V_i,\OPT).
\end{align*}

 Note that by the definition of $R_i$ and the equation above,
\begin{align*}
  &\forall x \in R_i,\quad \rho(x, \ctmp^i)\leq \thr^i=\frac{\eo_i}{k\kdelta/\alpha} \\
  & \leq \frac{(36\beta + 20)\cost(V_i,\OPT)}{k\kdelta/\alpha}.
\end{align*}

Hence, 
\begin{align}
  &\cost(R_i\cap (\smallp(V_i) \cup  F_{\alpha_i}(V_i)),\ctmp^i)\\
  &\leq |R_i \cap (\smallp(V_i) \cup  F_{\alpha_i}(V_i))| \frac{(36\beta + 20)\cost(V_i,\OPT)}{k\kdelta/\alpha} \notag \\ 
&\leq \frac{k+1}{k}(36\beta + 20)\cost(V_i,\OPT) \notag \\ &\leq (44\beta + 24) \cost(V_i,\OPT), \label{eq:eq_main_lemma_2}
\end{align}

Where the second inequality follows since,  \[|(\smallp(V_i) \cup  F_{\alpha_i}(V_i))| < (k+1) \kdelta/\alpha,\] and the third inequality follows since  $k \geq 5$.
By combining \eqref{eq_main_lemma_1} and \eqref{eq_main_lemma_2}, we get that
\begin{align*}
&\cost(R_i,\ctmp^i)= \\ 
&\cost(R_i\cap (\smallp(V_i) \cup  F_{\alpha_i}(V_i)),\ctmp^i) \\
&+ \cost(R_i\cap (\largep(V_i)\setminus F_{\alpha_i}(V_i)),\ctmp^i) \\ &\leq (80\beta+ 44)\cost(V_i,\OPT).
\end{align*}
This completes the proof of the first part of the lemma.  The second part of the lemma is the same as the third part of \corref{cor-lem}.
\end{proof}

\subsection{Proof of \thmref{main_theorem1}}\label{app:mainproof}

  \begin{proof}[Proof of \thmref{main_theorem1}] By a union bound, the event in \lemref{main_lemma} holds in all of the first $\min(I, 1/ \epsilon)$ rounds with probability at least $1-\delta$. 

  To prove the first part of the theorem, we show that under this joint event, \mainalgname\ stops after at most $1/\epsilon$ rounds. 
By the definition in line \ref{line:alpha}, $\alpha_i=\eta(\epsilon)/|V_i|$. Also, $\eta(\epsilon)=36kn^{\epsilon} \log (\frac{1.1k}{\delta \epsilon})$ and $\kdelta = 6.5\log (\frac{1.1k}{\delta \epsilon})$. Hence, by the second part of \lemref{main_lemma}, 
\begin{align*}
|V_{i+1}|\leq 5.5 k \kdelta/\alpha_i = 5.5 |V_i| k \kdelta /\eta(\epsilon) < |V_{i}|/n^\epsilon.
\end{align*}
 Since $V_1 = n$, it follows by induction that $|V_{i+1}| \leq  n^{1-i\epsilon}$. Recall that the stopping condition of the main loop of \mainalgname\ is $|V_{i+1}| \leq \eta (\epsilon)$. Clearly, we have $n^{1-i\epsilon} \leq \eta (\epsilon)$ once $i \geq (1- \frac{\log(36k\log (\frac{1.1k}{\delta \epsilon}))}{\log n})\frac{1}{\epsilon} -1 $.  Therefore, the total number of communication rounds is at most $(1- \frac{\log(36k\log (\frac{1.1k}{\delta \epsilon}))}{\log n})\frac{1}{\epsilon} < 1/\epsilon-1$. This proves the first part of the theorem. 

Next, we prove the cost approximation bound (the third part of the theorem). 

Since $\{R_i\}$ and $V_{\inum}$ are a partition of $X$ and $\ctmp^i \subseteq \coutput$ for all iterations $i$, we have
\begin{align*}
  \cost(X,\coutput)  \leq \sum_{i \in [\inum-1]}\cost(R_i,\ctmp^i) + \cost(V_{\inum},\ctmp^{\inum}),
\end{align*}
where $\ctmp^{\inum}$ is the result of the $k$-means clustering performed in line \ref{line:after_while} of \algref{main_algorithm}.

Using the first part of \lemref{main_lemma} and recalling that $\ctmp^{\inum}$ is a  $\beta$ approximation $k$-means solution for $V_{\inum}$, we get

 \begin{align*}
   \cost(X,\coutput) &\leq \sum_{i\in [\inum]}(80\beta+ 44) \cdot \cost(V_{i},\OPT)
                      \leq \sum_{i\in [\inum]}(80\beta+ 44)\cdot \cost(X,\OPT)\\
   &=\inum \cdot (80\beta+ 44) \cdot \cost(X,\OPT).
\end{align*} 
This completes the proof of the third part of the theorem.  
The second, fourth, and fifth parts follow directly from the definition of \mainalgname. This completes the proof.
\end{proof}

\subsection{Proof of \thmref{k_gauss}} \label{app:gaussian-p}

\begin{proof}[Proof of \thmref{k_gauss}]
Consider a $k$-Gaussian mixture in dimension $d$. Suppose that the Gaussians are all spherical with covariance matrix $\sigma^2 I$. 
For $Z \sim N(\mu,\sigma^2I)$, it is known \citep[see, e.g.][]{laurent2000adaptive}
 that for $\gamma > 0$, 
\begin{equation} \label{eq:gauss_ineq}
\P[||Z-\mu||_2^2\leq \sigma^2 d\cdot (1+2\sqrt{\frac{\log(\frac{1}{\gamma})}{d}} +\frac{2\log(\frac{1}{\gamma})}{d})] \geq 1-\gamma.
\end{equation}
In other words, for large values of $d$, almost all the points drawn from each Gaussian are about $\sigma \sqrt{d}$-far from the mean of the Gaussian. Thus, with high probability, the optimal $k$-clustering cost for a dataset of size $n$ is $\Theta(n\sigma^2 d)$.

Suppose that \mainalgname\ runs on a dataset drawn from this $k$-mixture. In the first iteration, \mainalgname\ calculates a $\ka$-clustering over a random sample of points from the dataset, using  the $\beta$ approximation algorithm $\cA$. Since $\ka$ does not depend on $d$, it is easy to see that for a large enough $d$, the average distance of the dataset points from any $\ka$ centers cannot be significantly smaller than the average distance of these points from their Gaussian centers. Therefore, the cost of the calculated $\ka$ clustering on the dataset is $\Theta(n\sigma^2 d)$.

We now show that \mainalgname\ stops after one round. 
First, we consider the value of $\cost_{\frac{3}{2} (k+1) \kdelta  }(P_2,\ctmp)$, which is used to calculate $\thr$ in line \ref{line:calc_thresh} of \algref{main_algorithm}. This is the cost of $\ctmp$ on $P_2$ after removing the  $\frac{3}{2} (k+1)\kdelta$  points that are furthest from $\ctmp$.  Note that $|P_2| = \alpha n$. Therefore, the fraction of points from $P_2$ disregarded in the calculation of the truncated cost is $k\kdelta/(\alpha n) = 6.5k\log(\frac{1.1k}{\delta})/(\alpha n) = 6.5k\log(\frac{1.1k}{\delta})/\eta(\epsilon) = O(n^{-\epsilon})$. Therefore, this fraction goes to zero for large $n$. 
It follows that  $\cost_{\frac{3}{2} (k+1) \kdelta  }(P_2,\ctmp) = |P_2|\Theta(\sigma^2 d) = \Theta(\alpha n\sigma^2d)$. Since $\alpha = \eta(\epsilon)/n = \Theta(n^{\epsilon-1})$, we get $\cost_{\frac{3}{2} (k+1) \kdelta  }(P_2,\ctmp) = \Theta(n^\epsilon \sigma^2 d)$. The threshold is thus $\thr = \Theta(\cost_{\frac{3}{2} (k+1) \kdelta  }(P_2,\ctmp)) =  \Theta(n^\epsilon\sigma^2 d)$.

We now show that \mainalgname\ stops after a single round, by showing that with high probability, all the points in $X$ are closer to $\ctmp$ than $v$.
From \eqref{gauss_ineq} with $\gamma=\log (1/\delta)/n$,
we get that with a probability at least $1-\log (1/\delta)/n$,  a point drawn from a Gaussian has a square distance of $O(\sigma^2 (d + \log(n/\delta)))$ to the center of the Gaussian. Hence, with probability at least $1-\delta$, this holds for all the points in the dataset. Clearly, this implies that the centers $\ctmp$ selected by $\cA$ must include centers with a square distance of $O(\sigma^2 (d + \log(n/\delta)))$ from the Gaussian mean, otherwise the $\beta$-approximation guarantee would not hold. 
 By the assumption of the theorem, 
 $\epsilon >  \log\log (n/\delta)/\log n$. Therefore, $n^{\epsilon}>\log(n/\delta)$. It follows that $v = \Omega(\sigma^2d\log(n/\delta))$. Thus, for a large enough $d$, $v$ is larger than the distance of all points from $\ctmp$. As a result, all the dataset points are removed in the first round of \mainalgname, and the algorithm completes after one round. By \thmref{main_theorem1}, this implies also that the cost of the output clustering is a constant approximation of the optimal cost. 
\end{proof}

\subsection{Proof of \thmref{hard_instance}} \label{app:hard_instance}

\begin{proof}[Proof of \thmref{hard_instance}]
  The proof is based on an example of  \citet{bachem2017distributed} of a hard instance for \kpar.
  \citet[Theorem 2]{bachem2017distributed} describes a dataset of size $2k-2$ such that for any value of the \kpar\ parameter $l$, \kpar\ requires at least $k-1$ rounds to obtain a constant approximation.
The dataset in the example includes $k$ distinct points $\{x_i\}_{i\in[k]}$, where $x_1$ has $k-1$ copies in the dataset and each of $x_2,\ldots,x_k$ appear a single time in the dataset.

To prove the claim in the theorem, we construct a dataset of size $n > n_0$ based on this example, by duplicating the above dataset $z = \ceil{n_0/(2k-2)}$ times. The number of rounds required by \kpar\ remains the same, as can be verified by following the proof of Theorem 2 in \citet{bachem2017distributed}.

In contrast, we now show that \mainalgname\ stops after one round on this dataset. Consider the sub-sample $P_1$, which is calculated in the first round of \mainalgname. For any $i \in [k]$,
\begin{align*}
\P[x_i \notin P_1] &\leq (1-\alpha)^z \leq \exp(-\alpha \cdot z)\\
&\leq  \exp(- \frac{\eta(\epsilon)}{2z(k-1)} \cdot z)\leq \exp(- \frac{\eta(\epsilon)}{2(k-1)})\\
& \leq \delta/k.
\end{align*}
The last inequality follows since $\eta(\epsilon) =36kn^{\epsilon}\log (\frac{k}{\delta})$. Therefore, with probability at least $1-\delta$, an instance of each of $\{x_i\}_{i \in [k]}$ is found in $P_1$.  Hence, the optimal clustering for $P_1$ includes all the distinct points from $X$, and has a cost of zero. As a result, also $\ctmp$ must have a cost of zero and so it also includes all the distinct points from $X$, leading to the removal of all the dataset points from each of the machines in line \ref{line:remove_points}. Therefore, \mainalgname\ stops after one round and returns an optimal clustering.
\end{proof}

\onecolumn
\subsection{Full experiment results} \label{app:def_exp}  
In this section, we provide the full results of all the experiments described in \secref{experiments}.  Each table reports experiments on one of the datasets in \tabref{datasets}.  The results are divided to two subsections. \appref{kmeans_results} shows the results of \mainalgname\ and \kpar\ when the standard Kmeans implementation is used as a black-box algorithm for \mainalgname, and \appref{mini_batch_results} shows the results when \texttt{MiniBatchKMeans} is used as the black box.

\subsection{Results for standard Kmeans as black-box for \mainalgname}\label{app:kmeans_results}

 The results for \mainalgname\ and \kpar, when  standard Kmeans algorithm used as black-box  for \mainalgname\ are provided below in \tabref{comparison_gaussians}, \tabref{comparison_higgs}, \tabref{comparison_census},  \tabref{comparison_kdd}, and \tabref{comparison_BigCross}.

\begin{table}[]
\caption{\texttt{k-GaussiansMixture} dataset experiments with Standard KMeans as black-box. 
‘T’ stands for time in seconds.}
\label{tab:comparison_gaussians}
\begin{tabular}{lllllllll} \hline 
$k$ & ALG                  & $\epsilon$ & $P_1$     & Output size & Rounds & Cost                & T (machine)   & T (Total)     \\ \hlineB{3}
\multirow{9}{*}{25}  & \multirow{4}{*}{\mainalgname} & 0.2 & 126,978 & 90      & 1 & 150.1$\pm$0            & 0.32$\pm$0.07 & 6.56$\pm$0.22  \\
    &                      & 0.1     & 25,335  & 96          & 1      & 150.2$\pm$0            & 0.44$\pm$0.08  & 2.51$\pm$0.12  \\
    &                      & 0.05    & 11,316  & 127$\pm$1      & 1      & 150.3$\pm$0            & 0.37$\pm$0.09  & 1.75$\pm$0.12  \\
    &                      & 0.01    & 5,939   & 348         & 3      & 150.1$\pm$0            & 0.73$\pm$0.07  & 4.11$\pm$0.21  \\ \clineB{2-9}{2}
    & \multirow{5}{*}{\kpar} & -       & -      & 51          & 1      & 1,688,270.3$\pm$951992.1 & 0.05$\pm$0     & 0.15$\pm$0.03  \\
    &                      & -       & -      & 101         & 2      & 37,530.5$\pm$46409      & 0.33$\pm$0.01  & 0.43$\pm$0.02  \\
    &                      & -       & -      & 151         & 3      & 196.9$\pm$18.6         & 0.76$\pm$0.03  & 0.87$\pm$0.03  \\
    &                      & -       & -      & 201         & 4      & 171.2$\pm$4.7          & 1.32$\pm$0.06  & 1.42$\pm$0.06  \\
    &                      & -       & -      & 251         & 5      & 164.4$\pm$2.1          & 1.98$\pm$0.07  & 2.1$\pm$0.07   \\ \hlineB{2}
\multirow{9}{*}{50}  & \multirow{4}{*}{\mainalgname} & 0.2 & 285,296 & 121     & 1 & 150.1$\pm$0            & 0.39$\pm$0.08 & 16.39$\pm$0.26 \\
    &                      & 0.1     & 56,924  & 127         & 1      & 150.2$\pm$0            & 0.48$\pm$0.08  & 5.03$\pm$0.2   \\
    &                      & 0.05    & 25,427  & 137$\pm$1      & 1      & 150.3$\pm$0            & 0.57$\pm$0.08  & 3.43$\pm$0.12  \\
    &                      & 0.01    & 13,344  & 346         & 2      & 150.2$\pm$0            & 0.79$\pm$0.09  & 5.02$\pm$0.15  \\ \clineB{2-9}{2}
    & \multirow{5}{*}{\kpar} & -       & -      & 101         & 1      & 1,283,640.5$\pm$558248.2 & 0.05$\pm$0     & 0.24$\pm$0.04  \\
    &                      & -       & -      & 201         & 2      & 13,399.3$\pm$11108.8    & 0.62$\pm$0.03  & 0.81$\pm$0.04  \\
    &                      & -       & -      & 301         & 3      & 211.5$\pm$6.8          & 1.45$\pm$0.04  & 1.66$\pm$0.04  \\
    &                      & -       & -      & 401         & 4      & 174.9$\pm$2.3          & 2.52$\pm$0.05  & 2.71$\pm$0.07  \\
    &                      & -       & -      & 501         & 5      & 166$\pm$1              & 3.83$\pm$0.08  & 4.04$\pm$0.12  \\ \hlineB{2}
\multirow{9}{*}{100} & \multirow{4}{*}{\mainalgname} & 0.2 & 633,271 & 177     & 1 & 150.1$\pm$0            & 0.53$\pm$0.05 & 73.11$\pm$0.73 \\
    &                      & 0.1     & 126,354 & 183         & 1      & 150.1$\pm$0            & 0.67$\pm$0.1   & 13.77$\pm$0.3  \\
    &                      & 0.05    & 56,440  & 212$\pm$41     & 1      & 150.3$\pm$0            & 0.68$\pm$0.11  & 8.22$\pm$0.35  \\
    &                      & 0.01    & 29,620  & 428$\pm$42     & 2      & 150.2$\pm$0            & 0.95$\pm$0.13  & 10.78$\pm$0.23 \\ \clineB{2-9}{2}
                     & \multirow{5}{*}{\kpar}    & -   & -      & 201     & 1 & 1,079,458.8$\pm$266,814.6 & 0.05$\pm$0.01 & 0.45$\pm$0.13  \\
    &                      & -       & -      & 401         & 2      & 25,866.5$\pm$16072.3    & 1.09$\pm$0.03  & 1.51$\pm$0.14  \\
    &                      & -       & -      & 601         & 3      & 226.9$\pm$61.2         & 2.67$\pm$0.06  & 3.09$\pm$0.1   \\
    &                      & -       & -      & 801         & 4      & 176.6$\pm$3.1          & 4.75$\pm$0.05  & 5.21$\pm$0.08  \\
    &                      & -       & -      & 1001        & 5      & 167.2$\pm$1.7          & 7.09$\pm$0.1   & 7.52$\pm$0.11  \\ \hlineB{2}
\multirow{8}{*}{200} & \multirow{3}{*}{\mainalgname} & 0.1 & 277,721 & 297$\pm$18 & 1 & 150.1$\pm$0            & 0.87$\pm$0.09 & 42.15$\pm$0.26 \\
    &                      & 0.05    & 124,053 & 371$\pm$81     & 1      & 150.3$\pm$0            & 0.93$\pm$0.12  & 21.97$\pm$0.55 \\
    &                      & 0.01    & 65,104  & 648$\pm$61     & 2      & 150.2$\pm$0            & 1.26$\pm$0.1   & 27.1$\pm$0.52  \\ \clineB{2-9}{2}
    & \multirow{5}{*}{\kpar} & -       & -      & 401         & 1      & 1,104,954$\pm$201,686.7   & 0.05$\pm$0.01  & 0.82$\pm$0.06  \\
    &                      & -       & -      & 801         & 2      & 26,593.9$\pm$9,916.1     & 2.08$\pm$0.04  & 3$\pm$0.14     \\
    &                      & -       & -      & 1201        & 3      & 218.8$\pm$14.9         & 4.97$\pm$0.09  & 5.89$\pm$0.1   \\
    &                      & -       & -      & 1601        & 4      & 175.7$\pm$1.5          & 8.72$\pm$0.1   & 9.74$\pm$0.17  \\
    &                      & -       & -      & 2001        & 5      & 167$\pm$1.8            & 13.16$\pm$0.12 & 14.18$\pm$0.17 \\ \hlineB{2}
\end{tabular}
\end{table}

\begin{table}[]
\caption{\texttt{Higgs} dataset experiments with Standard KMeans as black-box. 
‘T’ stands for time in seconds.}
\label{tab:comparison_higgs}
\begin{tabular}{lllllllll} \hline
$k$                  & ALG                     & $\epsilon$ & $P_1$      & Output size & Rounds & Cost ($\cdot 10^6$) & T (Machine) & T (Total)      \\ \hlineB{3}
\multirow{9}{*}{25}  & \multirow{4}{*}{\mainalgname} & 0.2       & 126,978 & 92          & 1      & 129$\pm$0.38                    & 0.3$\pm$0.02  & 112.01$\pm$4.91  \\
 &                      & 0.1  & 25,335 & 121  & 1 & 144$\pm$2.76 & 0.32$\pm$0.05  & 14.59$\pm$1.39   \\
 &                      & 0.05 & 11,316 & 204  & 2 & 144$\pm$1.53 & 0.31$\pm$0.03  & 9.23$\pm$0.33    \\
 &                      & 0.01 & 5,939  & 348  & 3 & 134$\pm$1.09 & 0.63$\pm$0.08  & 7.88$\pm$0.2     \\ \clineB{2-9}{2}
 & \multirow{5}{*}{\kpar} & -    & -      & 51   & 1 & 171$\pm$3.99 & 0.05$\pm$0     & 0.16$\pm$0.02    \\
 &                      & -    & -      & 101  & 2 & 153$\pm$1.47 & 0.31$\pm$0.02  & 0.43$\pm$0.03    \\
 &                      & -    & -      & 151  & 3 & 148$\pm$1.41 & 0.68$\pm$0.09  & 0.81$\pm$0.09    \\
 &                      & -    & -      & 201  & 4 & 143$\pm$0.98 & 1.06$\pm$0.05  & 1.19$\pm$0.06    \\
 &                      & -    & -      & 251  & 5 & 139$\pm$0.58 & 1.59$\pm$0.04  & 1.72$\pm$0.03    \\ \hlineB{2}
\multirow{9}{*}{50}  & \multirow{4}{*}{\mainalgname} & 0.2       & 285,296  & 122.8$\pm$0.4  & 1      & 117$\pm$0.19                    & 0.35$\pm$0.02 & 318.99$\pm$5.72  \\
 &                      & 0.1  & 56,924  & 177  & 1 & 134$\pm$1.75 & 0.38$\pm$0.04  & 54.53$\pm$3.34   \\
 &                      & 0.05 & 25,427  & 183  & 1 & 128$\pm$0.79 & 0.39$\pm$0.05  & 20.6$\pm$1.46    \\
 &                      & 0.01 & 13,344  & 346  & 2 & 124$\pm$0.51 & 0.56$\pm$0.04  & 17.16$\pm$0.76   \\ \clineB{2-9}{2}
 & \multirow{5}{*}{\kpar} & -    & -      & 101  & 1 & 153$\pm$1.66 & 0.05$\pm$0     & 0.25$\pm$0.04    \\
 &                      & -    & -      & 201  & 2 & 139$\pm$1.14 & 0.5$\pm$0.03   & 0.72$\pm$0.05    \\
 &                      & -    & -      & 301  & 3 & 133$\pm$0.63 & 1.13$\pm$0.05  & 1.38$\pm$0.08    \\
 &                      & -    & -      & 401  & 4 & 129$\pm$0.65 & 1.95$\pm$0.03  & 2.2$\pm$0.05     \\
 &                      & -    & -      & 501  & 5 & 127$\pm$0.45 & 2.96$\pm$0.1   & 3.24$\pm$0.11    \\ \hlineB{2}
\multirow{9}{*}{100} & \multirow{4}{*}{\mainalgname} & 0.2       & 633,272  & 178$\pm$1      & 1      & 106$\pm$0.09                    & 0.45$\pm$0.03 & 908.56$\pm$9.99  \\
 &                      & 0.1  & 126,354 & 283  & 1 & 131$\pm$1.86 & 0.44$\pm$0.02  & 191.25$\pm$12.93 \\
 &                      & 0.05 & 56,440  & 289  & 1 & 122$\pm$0.55 & 0.48$\pm$0.04  & 69.38$\pm$5.04   \\
 &                      & 0.01 & 29,620  & 508  & 2 & 120$\pm$0.59 & 0.68$\pm$0.07  & 55.61$\pm$2.24   \\ \clineB{2-9}{2}
                     & \multirow{5}{*}{\kpar}    & -         & -       & 201         & 1      & 137$\pm$0.85                    & 0.06$\pm$0.01 & 0.44$\pm$0.04    \\
 &                      & -    & -      & 401  & 2 & 125$\pm$0.92 & 0.85$\pm$0.03  & 1.29$\pm$0.06    \\
 &                      & -    & -      & 601  & 3 & 120$\pm$0.66 & 2.08$\pm$0.05  & 2.58$\pm$0.07    \\
 &                      & -    & -      & 801  & 4 & 117$\pm$0.5  & 3.75$\pm$0.05  & 4.29$\pm$0.1     \\
 &                      & -    & -      & 1001 & 5 & 115$\pm$0.61 & 5.62$\pm$0.08  & 6.15$\pm$0.08    \\ \hlineB{2}
\multirow{8}{*}{200} & \multirow{3}{*}{\mainalgname} & 0.1       & 277,721  & 470$\pm$20     & 1      & 119$\pm$2.93                    & 0.65$\pm$0.02 & 671.95$\pm$18.65 \\
 &                      & 0.05 & 124,053 & 496  & 1 & 115$\pm$0.51 & 0.67$\pm$0.05  & 251.17$\pm$16.4  \\
 &                      & 0.01 & 65,104  & 820  & 2 & 119$\pm$0.79 & 0.84$\pm$0.02  & 192.33$\pm$5.38  \\ \clineB{2-9}{2}
 & \multirow{5}{*}{\kpar} & -    & -      & 401  & 1 & 122$\pm$1.22 & 0.06$\pm$0     & 0.84$\pm$0.04    \\
 &                      & -    & -      & 801  & 2 & 112$\pm$0.27 & 1.69$\pm$0.07  & 2.6$\pm$0.08     \\
 &                      & -    & -      & 1201 & 3 & 108$\pm$0.26 & 4.1$\pm$0.1    & 5.13$\pm$0.1     \\
 &                      & -    & -      & 1601 & 4 & 106$\pm$0.31 & 7.13$\pm$0.15  & 8.33$\pm$0.17    \\
 &                      & -    & -      & 2001 & 5 & 104$\pm$0.21 & 11.08$\pm$0.33 & 12.42$\pm$0.31  \\ \hlineB{2}
\end{tabular}
\end{table}

\begin{table}[]
\caption{\texttt{Census1990} dataset experiments with Standard KMeans as black-box. 
‘T’ stands for time in seconds.}
\label{tab:comparison_census}  
\begin{tabular}{lllllllll} \hline
$k$                  & ALG                     & $\epsilon$ & $P_1$     & Output size & Rounds & Cost ($\cdot 10^6$) & T. Machine & T. Total      \\ \hlineB{3}
\multirow{9}{*}{25}  & \multirow{4}{*}{\mainalgname} & 0.2       & 95,908  & 90          & 1      & 172$\pm$1.34                    & 0.12$\pm$0.03 & 17.67$\pm$1.15   \\
 &                      & 0.1  & 22,018  & 121   & 1   & 188$\pm$4.89   & 0.1$\pm$0.02  & 5.4$\pm$0.3      \\
 &                      & 0.05 & 10,550  & 204   & 2   & 179$\pm$2.68   & 0.11$\pm$0.01 & 5.69$\pm$0.24    \\
 &                      & 0.01 & 5,856   & 489   & 4   & 176$\pm$1.07   & 0.23$\pm$0.02 & 8.69$\pm$0.17    \\ \clineB{2-9}{2}
 & \multirow{5}{*}{\kpar} & -    & -      & 51    & 1   & 418$\pm$133.76 & 0.05$\pm$0    & 0.16$\pm$0.04    \\
 &                      & -    & -      & 101   & 2   & 218$\pm$11.09  & 0.15$\pm$0    & 0.27$\pm$0.03    \\
 &                      & -    & -      & 151   & 3   & 199$\pm$3.39   & 0.28$\pm$0    & 0.4$\pm$0.03     \\
 &                      & -    & -      & 201   & 4   & 188$\pm$2.8    & 0.44$\pm$0.01 & 0.56$\pm$0.03    \\
 &                      & -    & -      & 251   & 5   & 185$\pm$3.03   & 0.61$\pm$0.01 & 0.76$\pm$0.06    \\ \hlineB{2}
\multirow{9}{*}{50}  & \multirow{4}{*}{\mainalgname} & 0.2       & 215,487 & 121         & 1      & 131$\pm$1.47                    & 0.11$\pm$0.01 & 49.62$\pm$3.63   \\
 &                      & 0.1  & 49,471  & 177   & 1   & 156$\pm$2.77   & 0.11$\pm$0.01 & 14.57$\pm$1.24   \\
 &                      & 0.05 & 23,704  & 266   & 2   & 140$\pm$2.39   & 0.14$\pm$0.03 & 15.61$\pm$0.65   \\
 &                      & 0.01 & 13,158  & 592   & 4   & 138$\pm$1.55   & 0.24$\pm$0.03 & 20.17$\pm$0.53   \\ \clineB{2-9}{2}
 & \multirow{5}{*}{\kpar} & -    & -      & 101   & 1   & 318$\pm$79.46  & 0.05$\pm$0    & 0.27$\pm$0.05    \\
 &                      & -    & -      & 201   & 2   & 169$\pm$6.17   & 0.2$\pm$0     & 0.41$\pm$0.02    \\
 &                      & -    & -      & 301   & 3   & 153$\pm$3.33   & 0.42$\pm$0.01 & 0.67$\pm$0.05    \\
 &                      & -    & -      & 401   & 4   & 144$\pm$1.97   & 0.66$\pm$0.01 & 0.94$\pm$0.07    \\
 &                      & -    & -      & 501   & 5   & 140$\pm$1.43   & 0.97$\pm$0.02 & 1.3$\pm$0.07     \\ \hlineB{2}
\multirow{9}{*}{100} & \multirow{4}{*}{\mainalgname} & 0.2       & 478,318 & 177         & 1      & 100$\pm$0.81                    & 0.15$\pm$0.02 & 172.23$\pm$11.82 \\
 &                      & 0.1  & 109,813 & 283   & 1   & 132$\pm$2.7    & 0.14$\pm$0.01 & 43.41$\pm$2      \\
 &                      & 0.05 & 52,616  & 378   & 2   & 110$\pm$0.87   & 0.17$\pm$0.03 & 45.09$\pm$2.1    \\
 &                      & 0.01 & 29,207  & 712   & 3   & 110$\pm$0.88   & 0.29$\pm$0.04 & 51.47$\pm$1.52   \\ \clineB{2-9}{2}
 & \multirow{5}{*}{\kpar} & -    & -      & 201   & 1   & 264$\pm$67.1   & 0.05$\pm$0    & 0.44$\pm$0.02    \\
 &                      & -    & -      & 401   & 2   & 133$\pm$2.3    & 0.31$\pm$0.01 & 0.77$\pm$0.08    \\
 &                      & -    & -      & 601   & 3   & 119$\pm$1.05   & 0.65$\pm$0.01 & 1.15$\pm$0.06    \\
 &                      & -    & -      & 801   & 4   & 112$\pm$0.97   & 1.1$\pm$0.01  & 1.69$\pm$0.13    \\
 &                      & -    & -      & 1001  & 5   & 109$\pm$0.67   & 1.67$\pm$0.03 & 2.21$\pm$0.08    \\ \hlineB{2}
\multirow{8}{*}{200} & \multirow{3}{*}{\mainalgname} & 0.1       & 241,364 & 489         & 1      & 111$\pm$1.9                     & 0.19$\pm$0.01 & 152.95$\pm$9.98  \\
 &                      & 0.05 & 115,648 & 563.2 & 1,2 & 89.7$\pm$1.13  & 0.22$\pm$0.03 & 143.42$\pm$10.41 \\
 &                      & 0.01 & 64,197  & 1130  & 3   & 87.3$\pm$0.54  & 0.41$\pm$0.03 & 147.67$\pm$6.72  \\ \clineB{2-9}{2}
 & \multirow{5}{*}{\kpar} & -    & -      & 401   & 1   & 224$\pm$49.18  & 0.05$\pm$0    & 0.92$\pm$0.14    \\
 &                      & -    & -      & 801   & 2   & 104$\pm$2.75   & 0.5$\pm$0.01  & 1.45$\pm$0.07    \\
 &                      & -    & -      & 1201  & 3   & 93.8$\pm$0.97  & 1.14$\pm$0.02 & 2.27$\pm$0.09    \\
 &                      & -    & -      & 1601  & 4   & 88.7$\pm$0.45  & 1.96$\pm$0.03 & 3.19$\pm$0.06    \\
 &                      & -    & -      & 2001  & 5   & 87$\pm$0.56    & 3.03$\pm$0.06 & 4.35$\pm$0.1    \\ \hlineB{2}
\end{tabular}
\end{table}

\begin{table}[]
\caption{\texttt{KDDCup1999} dataset experiments with Standard KMeans as black-box. 
‘T’ stands for time in seconds.}
\label{tab:comparison_kdd}
\begin{tabular}{lllllllll} \hline 
$k$                  & ALG                     & $\epsilon$ & $P_1$     & Output size & Rounds     & Cost ($\cdot 10^{12}$) & T. Machine & T. Total    \\ \hlineB{3}
\multirow{9}{*}{25} & \multirow{4}{*}{\mainalgname} & 0.2 & 110,088 & 115 & 1 & 112.79$\pm$10.71 & 0.15$\pm$0.02 & 9.25$\pm$1.84 \\
                     &                         & 0.1       & 23,590  & 236$\pm$40     & 2.2$\pm$0.4   & 118.21$\pm$18.54                 & 0.24$\pm$0.02 & 5.62$\pm$0.96  \\
                     &                         & 0.05      & 10,920  & 433         & 4          & 130.33$\pm$12.46                 & 0.35$\pm$0.03 & 6.04$\pm$0.32  \\
                     &                         & 0.01      & 5,896   & 1324$\pm$49    & 11.2$\pm$0.42 & 113.55$\pm$10.09                 & 1.01$\pm$0.09 & 13.91$\pm$0.73 \\ \clineB{2-9}{2}
                     & \multirow{5}{*}{\kpar}    & -         & -      & 51          & 1          & 253.76$\pm$34.98                 & 0.07$\pm$0    & 0.18$\pm$0.03  \\
                     &                         & -         & -      & 101         & 2          & 157.12$\pm$12.26                 & 0.23$\pm$0    & 0.34$\pm$0.03  \\
                     &                         & -         & -      & 151         & 3          & 148.23$\pm$22.31                 & 0.44$\pm$0.01 & 0.55$\pm$0.03  \\
                     &                         & -         & -      & 201         & 4          & 124.1$\pm$3.3                    & 0.71$\pm$0.01 & 0.82$\pm$0.04  \\
                     &                         & -         & -      & 251         & 5          & 126.4$\pm$11.82                  & 1.03$\pm$0.01 & 1.15$\pm$0.02  \\ \hlineB{2}
\multirow{9}{*}{50}  & \multirow{4}{*}{\mainalgname} & 0.2       & 247,347 & 171         & 1          & 21.77$\pm$1.33                   & 0.18$\pm$0.03 & 23.34$\pm$3.9  \\
                     &                         & 0.1       & 53,003  & 304         & 2          & 23.71$\pm$4.96                   & 0.3$\pm$0.02  & 11.84$\pm$0.83 \\
                     &                         & 0.05      & 24,535  & 515$\pm$70     & 3.5$\pm$0.5   & 23.95$\pm$3.82                   & 0.41$\pm$0.02 & 11.1$\pm$1.21  \\
                     &                         & 0.01      & 13,249  & 1352$\pm$62    & 8.8$\pm$0.4   & 22.98$\pm$2.19                   & 0.96$\pm$0.06 & 19.23$\pm$1.05 \\ \clineB{2-9}{2}
                     & \multirow{5}{*}{\kpar}    & -         & -      & 101         & 1          & 108.72$\pm$29.12                 & 0.07$\pm$0    & 0.25$\pm$0.02  \\
                     &                         & -         & -      & 201         & 2          & 37.17$\pm$10.96                  & 0.33$\pm$0.01 & 0.56$\pm$0.08  \\
                     &                         & -         & -      & 301         & 3          & 35.18$\pm$4.36                   & 0.69$\pm$0.01 & 0.91$\pm$0.05  \\
                     &                         & -         & -      & 401         & 4          & 35.19$\pm$5.25                   & 1.13$\pm$0.01 & 1.36$\pm$0.04  \\
                     &                         & -         & -      & 501         & 5          & 32.8$\pm$5.59                    & 1.69$\pm$0.02 & 1.97$\pm$0.08  \\ \hlineB{2}
\multirow{9}{*}{100} & \multirow{4}{*}{\mainalgname} & 0.2       & 549,037 & 277         & 1          & 7.43$\pm$0.66                    & 0.27$\pm$0.04 & 71.35$\pm$7.87 \\
                     &                         & 0.1       & 117,651 & 466         & 2          & 8.07$\pm$0.76                    & 0.39$\pm$0.02 & 29.1$\pm$2.11  \\
                     &                         & 0.05      & 54,461  & 667         & 3          & 7.13$\pm$0.52                    & 0.55$\pm$0.03 & 24.85$\pm$1.15 \\
                     &                         & 0.01      & 29,409  & 1528        & 7          & 5.97$\pm$0.36                    & 1.15$\pm$0.07 & 37.81$\pm$1.3  \\ \clineB{2-9}{2}
                     & \multirow{5}{*}{\kpar}    & -         & -      & 201         & 1          & 51.75$\pm$15.63                  & 0.06$\pm$0    & 0.47$\pm$0.05  \\
                     &                         & -         & -      & 401         & 2          & 6.49$\pm$0.74                    & 0.54$\pm$0.01 & 0.92$\pm$0.04  \\
                     &                         & -         & -      & 601         & 3          & 8.41$\pm$0.8                     & 1.16$\pm$0.02 & 1.62$\pm$0.14  \\
                     &                         & -         & -      & 801         & 4          & 8.54$\pm$1.34                    & 1.99$\pm$0.03 & 2.4$\pm$0.04   \\
                     &                         & -         & -      & 1001        & 5          & 7.95$\pm$0.49                    & 3.05$\pm$0.03 & 3.55$\pm$0.13  \\ \hlineB{2}
\multirow{8}{*}{200} & \multirow{3}{*}{\mainalgname} & 0.1       & 258,592 & 778         & 2          & 3.06$\pm$0.08                    & 0.56$\pm$0.04 & 93.74$\pm$6.77 \\
                     &                         & 0.05      & 119,705 & 1088        & 3          & 2.89$\pm$0.33                    & 0.73$\pm$0.05 & 64.74$\pm$4.33 \\
                     &                         & 0.01      & 64,641  & 2060        & 6          & 2.46$\pm$0.21                    & 1.46$\pm$0.07 & 82.2$\pm$2.99  \\ \clineB{2-9}{2}
                     & \multirow{5}{*}{\kpar}    & -         & -      & 401         & 1          & 10.85$\pm$2.14                   & 0.06$\pm$0    & 0.89$\pm$0.18  \\
                     &                         & -         & -      & 801         & 2          & 1.71$\pm$0.5                     & 0.92$\pm$0.02 & 1.81$\pm$0.23  \\
                     &                         & -         & -      & 1201        & 3          & 2.62$\pm$0.35                    & 2.11$\pm$0.05 & 3.06$\pm$0.25  \\
                     &                         & -         & -      & 1601        & 4          & 2.76$\pm$0.11                    & 3.64$\pm$0.09 & 4.6$\pm$0.12   \\
                     &                         & -         & -      & 2001        & 5          & 2.41$\pm$0.35                    & 5.69$\pm$0.08 & 6.73$\pm$0.11 \\ \hlineB{2}
\end{tabular}
\end{table}

\begin{table}[]
\caption{\texttt{BigCross} dataset experiments with Standard KMeans as black-box. 
‘T’ stands for time in seconds.}
\label{tab:comparison_BigCross}
\begin{tabular}{lllllllll} \hline 
$k$                  & ALG                     & $\epsilon$ & $P_1$     & Output size & Rounds & Cost ($\cdot 10^{10}$) & T. Machine  & T. Total      \\ \hlineB{3}
\multirow{9}{*}{25}  & \multirow{4}{*}{\mainalgname} & 0.2       & 126,978 & 90          & 1      & 328$\pm$ 5       & 0.43$\pm$0.06  & 60.32$\pm$3.7    \\
                     &                         & 0.1       & 25,335  & 106$\pm$13     & 1      & 332$\pm$ 7       & 0.39$\pm$0.03  & 11.95$\pm$0.9    \\
                     &                         & 0.05      & 11,316  & 204         & 2      & 345$\pm$ 5       & 0.4$\pm$0.05   & 6.82$\pm$0.24    \\
                     &                         & 0.01      & 5,939   & 358$\pm$13     & 3      & 319$\pm$ 2       & 0.87$\pm$0.08  & 7.78$\pm$0.26    \\ \clineB{2-9}{2}
                     & \multirow{5}{*}{\kpar}    & -         & -      & 51          & 1      & 519$\pm$ 40      & 0.18$\pm$0.01  & 0.27$\pm$0.03    \\
                     &                         & -         & -      & 101         & 2      & 367$\pm$ 9       & 0.49$\pm$0.01  & 0.6$\pm$0.02     \\
                     &                         & -         & -      & 151         & 3      & 350$\pm$ 5       & 0.93$\pm$0.04  & 1.05$\pm$0.06    \\
                     &                         & -         & -      & 201         & 4      & 339$\pm$ 6       & 1.59$\pm$0.05  & 1.71$\pm$0.06    \\
                     &                         & -         & -      & 251         & 5      & 330$\pm$ 6       & 2.14$\pm$0.09  & 2.28$\pm$0.1     \\ \hlineB{2}
\multirow{9}{*}{50}  & \multirow{4}{*}{\mainalgname} & 0.2       & 285,296 & 121         & 1      & 224$\pm$ 4       & 0.41$\pm$0.01  & 164.48$\pm$16.65 \\
                     &                         & 0.1       & 56,924  & 127         & 1      & 221$\pm$ 3       & 0.47$\pm$0.05  & 35.69$\pm$2.03   \\
                     &                         & 0.05      & 25,427  & 266         & 2      & 242$\pm$ 3       & 0.5$\pm$0.05   & 19.17$\pm$1.26   \\
                     &                         & 0.01      & 13,344  & 444         & 3      & 215$\pm$ 1       & 0.83$\pm$0.06  & 17.42$\pm$0.61   \\ \clineB{2-9}{2}
                     & \multirow{5}{*}{\kpar}    & -         & -      & 101         & 1      & 365$\pm$ 28      & 0.18$\pm$0.02  & 0.39$\pm$0.03    \\
                     &                         & -         & -      & 201         & 2      & 244$\pm$ 6       & 0.78$\pm$0.07  & 1.02$\pm$0.09    \\
                     &                         & -         & -      & 301         & 3      & 230$\pm$ 2       & 1.33$\pm$0.01  & 1.58$\pm$0.03    \\
                     &                         & -         & -      & 401         & 4      & 223$\pm$ 2       & 2.27$\pm$0.07  & 2.55$\pm$0.1     \\
                     &                         & -         & -      & 501         & 5      & 217$\pm$ 1       & 3.54$\pm$0.19  & 3.84$\pm$0.12    \\ \hlineB{2}
\multirow{9}{*}{100} & \multirow{4}{*}{\mainalgname} & 0.2       & 633,272 & 177         & 1      & 151$\pm$ 1       & 0.54$\pm$0.06  & 510.31$\pm$41.78 \\
                     &                         & 0.1       & 126,354 & 183         & 1      & 152$\pm$ 1       & 0.53$\pm$0.03  & 105.06$\pm$9.78  \\
                     &                         & 0.05      & 56,440  & 289         & 1      & 170$\pm$ 2       & 0.61$\pm$0.05  & 48.38$\pm$3.31   \\
                     &                         & 0.01      & 29,620  & 580$\pm$ 50    & 2,3    & 154$\pm$ 2       & 0.94$\pm$0.07  & 48.47$\pm$3.34   \\ \clineB{2-9}{2}
                     & \multirow{5}{*}{\kpar}    & -         & -      & 201         & 1      & 242$\pm$ 20      & 0.18$\pm$0.03  & 0.56$\pm$0.06    \\
                     &                         & -         & -      & 401         & 2      & 169$\pm$ 2       & 1.1$\pm$0.04   & 1.54$\pm$0.05    \\
                     &                         & -         & -      & 601         & 3      & 157$\pm$ 2       & 2.43$\pm$0.18  & 2.94$\pm$0.18    \\
                     &                         & -         & -      & 801         & 4      & 153$\pm$ 1       & 3.94$\pm$0.25  & 4.49$\pm$0.33    \\
                     &                         & -         & -      & 1001        & 5      & 150$\pm$ 1       & 6.17$\pm$0.31  & 6.71$\pm$0.32    \\ \hlineB{2}
\multirow{8}{*}{200} & \multirow{3}{*}{\mainalgname} & 0.1       & 277,721 & 289         & 1      & 103$\pm$ 0       & 0.73$\pm$0.04  & 336.87$\pm$27.02 \\
                     &                         & 0.05      & 124,053 & 496         & 1      & 117$\pm$ 2       & 0.74$\pm$0.04  & 142.19$\pm$9.06  \\
                     &                         & 0.01      & 65,104  & 820         & 2      & 109$\pm$ 1       & 1.13$\pm$0.07  & 121.35$\pm$4.43  \\ \clineB{2-9}{2}
                     & \multirow{5}{*}{\kpar}    & -         & -      & 401         & 1      & 166$\pm$ 9       & 0.21$\pm$0.06  & 1.04$\pm$0.07    \\
                     &                         & -         & -      & 801         & 2      & 119$\pm$ 1       & 1.8$\pm$0.03   & 2.72$\pm$0.03    \\
                     &                         & -         & -      & 1201        & 3      & 111$\pm$ 1       & 4.08$\pm$0.13  & 5.31$\pm$0.28    \\
                     &                         & -         & -      & 1601        & 4      & 107$\pm$ 1       & 7.31$\pm$0.09  & 8.56$\pm$0.09    \\
                     &                         & -         & -      & 2001        & 5      & 106$\pm$ 0       & 10.86$\pm$0.32 & 12.24$\pm$0.32  \\ \hlineB{2}
\end{tabular}
\end{table} 
\clearpage

\subsection{Results for \texttt{MiniBatchKMeans} as black-box for \mainalgname}\label{app:mini_batch_results}
The results for \mainalgname\ and \kpar, when  \texttt{MiniBatchKMeans} used as black-box  for \mainalgname\ are provided below in  \tabref{experiments_summary_gaussian_S}, \tabref{experiments_summary_higgs_S} , \tabref{experiments_summary_census_S}, ,  \tabref{experiments_summary_KDD_S}, and  \tabref{experiments_summary_bigcross_S}. 

\begin{table}[]
\caption{\texttt{$k$-GaussianMixture} dataset experiments with \texttt{MiniBatchKMeans} used as black-box.
‘T’ stands for time in seconds.}
\label{tab:experiments_summary_gaussian_S}
\begin{tabular}{lllllllll} \hline 
$k$                  & ALG                     & $\epsilon$ & $P_1$     & Output size & Rounds   & Cost                & T (machine)   & T (Total)     \\ \hlineB{3}
\multirow{9}{*}{25}  & \multirow{4}{*}{\mainalgname} & 0.2     & 126,978 & 105$\pm$13     & 1        & 150.2$\pm$0.2          & 0.32$\pm$0.06  & 1.03$\pm$0.2   \\
                     &                         & 0.1     & 25,335  & 161$\pm$50     & 1.6$\pm$0.5 & 150.3$\pm$0.1          & 0.49$\pm$0.12  & 1.14$\pm$0.25  \\
                     &                         & 0.05    & 11,316  & 178$\pm$53     & 1.5$\pm$0.5 & 150.5$\pm$0.1          & 0.49$\pm$0.14  & 1.05$\pm$0.28  \\
                     &                         & 0.01    & 5,939   & 348         & 3        & 150.1$\pm$0            & 0.74$\pm$0.12  & 1.67$\pm$0.18  \\ \clineB{2-9}{2}
                     & \multirow{5}{*}{\kpar}    & -       & -      & 51          & 1        & 1,688,270.3$\pm$951992.1 & 0.05$\pm$0     & 0.15$\pm$0.03  \\
                     &                         & -       & -      & 101         & 2        & 37,530.5$\pm$46409      & 0.33$\pm$0.01  & 0.43$\pm$0.02  \\
                     &                         & -       & -      & 151         & 3        & 196.9$\pm$18.6         & 0.76$\pm$0.03  & 0.87$\pm$0.03  \\
                     &                         & -       & -      & 201         & 4        & 171.2$\pm$4.7          & 1.32$\pm$0.06  & 1.42$\pm$0.06  \\
                     &                         & -       & -      & 251         & 5        & 164.4$\pm$2.1          & 1.98$\pm$0.07  & 2.1$\pm$0.07   \\ \hlineB{2}
\multirow{9}{*}{50}  & \multirow{4}{*}{\mainalgname} & 0.2     & 285,296 & 171         & 1        & 150.4$\pm$0.3          & 0.52$\pm$0.1   & 1.91$\pm$0.28  \\
                     &                         & 0.1     & 56,924  & 216$\pm$41     & 1.5$\pm$0.5 & 152.1$\pm$5.2          & 0.51$\pm$0.07  & 1.5$\pm$0.31   \\
                     &                         & 0.05    & 25,427  & 244$\pm$42     & 1.7$\pm$0.5 & 150.6$\pm$0.2          & 0.65$\pm$0.14  & 1.48$\pm$0.23  \\
                     &                         & 0.01    & 13,344  & 361$\pm$47     & 2.1$\pm$0.3 & 150.3$\pm$0.1          & 0.83$\pm$0.09  & 1.86$\pm$0.18  \\ \clineB{2-9}{2}
                     & \multirow{5}{*}{\kpar}    & -       & -      & 101         & 1        & 1,283,640.5$\pm$558248.2 & 0.05$\pm$0     & 0.24$\pm$0.04  \\
                     &                         & -       & -      & 201         & 2        & 13,399.3$\pm$11108.8    & 0.62$\pm$0.03  & 0.81$\pm$0.04  \\
                     &                         & -       & -      & 301         & 3        & 211.5$\pm$6.8          & 1.45$\pm$0.04  & 1.66$\pm$0.04  \\
                     &                         & -       & -      & 401         & 4        & 174.9$\pm$2.3          & 2.52$\pm$0.05  & 2.71$\pm$0.07  \\
                     &                         & -       & -      & 501         & 5        & 166$\pm$1              & 3.83$\pm$0.08  & 4.04$\pm$0.12  \\ \hlineB{2}
\multirow{9}{*}{100} & \multirow{4}{*}{\mainalgname} & 0.2     & 633,271 & 277         & 1        & 628.2$\pm$612.3        & 0.67$\pm$0.13  & 4.19$\pm$0.76  \\
                     &                         & 0.1     & 126,354 & 406$\pm$52     & 2        & 154$\pm$6.1            & 0.91$\pm$0.11  & 3.22$\pm$0.3   \\
                     &                         & 0.05    & 56,440  & 390$\pm$54     & 1.9$\pm$0.3 & 150.7$\pm$0.5          & 0.94$\pm$0.1   & 2.7$\pm$0.37   \\
                     &                         & 0.01    & 29,620  & 550$\pm$54     & 2.4$\pm$0.5 & 150.3$\pm$0            & 1.05$\pm$0.13  & 2.91$\pm$0.21  \\ \clineB{2-9}{2}
                     & \multirow{5}{*}{\kpar}    & -       & -      & 201         & 1        & 1,079,458.8$\pm$266814.6 & 0.05$\pm$0.01  & 0.45$\pm$0.13  \\
                     &                         & -       & -      & 401         & 2        & 25,866.5$\pm$16072.3    & 1.09$\pm$0.03  & 1.51$\pm$0.14  \\
                     &                         & -       & -      & 601         & 3        & 226.9$\pm$61.2         & 2.67$\pm$0.06  & 3.09$\pm$0.1   \\
                     &                         & -       & -      & 801         & 4        & 176.6$\pm$3.1          & 4.75$\pm$0.05  & 5.21$\pm$0.08  \\
                     &                         & -       & -      & 1001        & 5        & 167.2$\pm$1.7          & 7.09$\pm$0.1   & 7.52$\pm$0.11  \\ \hlineB{2}
\multirow{8}{*}{200} & \multirow{3}{*}{\mainalgname} & 0.1 & 277,721 & 733$\pm$96 & 1.9$\pm$0.3 & 156.9$\pm$4.8 & 1.42$\pm$0.17 & 6.94$\pm$0.31 \\
                     &                         & 0.05    & 124,053 & 757$\pm$74     & 2        & 150.9$\pm$0.6          & 1.47$\pm$0.12  & 5.13$\pm$0.29  \\
                     &                         & 0.01    & 65,104  & 1008$\pm$86    & 3        & 855.6$\pm$1139.9       & 1.71$\pm$0.11  & 5.53$\pm$0.26  \\ \clineB{2-9}{2}
                     & \multirow{5}{*}{\kpar}    & -       & -      & 401         & 1        & 1,104,954$\pm$201686.7   & 0.05$\pm$0.01  & 0.82$\pm$0.06  \\
                     &                         & -       & -      & 801         & 2        & 26,593.9$\pm$9916.1     & 2.08$\pm$0.04  & 3$\pm$0.14     \\
                     &                         & -       & -      & 1201        & 3        & 218.8$\pm$14.9         & 4.97$\pm$0.09  & 5.89$\pm$0.1   \\
                     &                         & -       & -      & 1601        & 4        & 175.7$\pm$1.5          & 8.72$\pm$0.1   & 9.74$\pm$0.17  \\
                     &                         & -       & -      & 2001        & 5        & 167$\pm$1.8            & 13.16$\pm$0.12 & 14.18$\pm$0.17 \\ \hlineB{2}
\end{tabular}
\end{table}

\begin{table}[]
\caption{\texttt{Higgs} dataset experiments with MiniBatchKMeans used as black-box.
‘T’ stands for time in seconds.}
\label{tab:experiments_summary_higgs_S}
\begin{tabular}{lllllllll} \hline 
$k$                  & Alg                     & $\epsilon$ & $P_1$     & Output size & Rounds & Cost ($\cdot 10^6$) & T (machine)   & T (Total)     \\ \hlineB{3}
\multirow{9}{*}{25}  & \multirow{4}{*}{\mainalgname} & 0.2     & 126,978 & 92$\pm$2       & 1      & 129.5$\pm$0.9                   & 0.27$\pm$0.03  & 1.04$\pm$0.12  \\
                     &                         & 0.1     & 25,335  & 121         & 1      & 141.5$\pm$2                     & 0.29$\pm$0.02  & 0.83$\pm$0.09  \\
                     &                         & 0.05    & 11,316  & 204         & 2      & 144.5$\pm$1.8                   & 0.31$\pm$0.04  & 0.99$\pm$0.11  \\
                     &                         & 0.01    & 5,939   & 348         & 3      & 135.2$\pm$0.7                   & 0.49$\pm$0.04  & 1.52$\pm$0.19  \\ \clineB{2-9}{2}
                     & \multirow{5}{*}{\kpar}    & -       & -      & 51          & 1      & 171$\pm$4                       & 0.05$\pm$0     & 0.16$\pm$0.02  \\
                     &                         & -       & -      & 101         & 2      & 153$\pm$1.5                     & 0.31$\pm$0.02  & 0.43$\pm$0.03  \\
                     &                         & -       & -      & 151         & 3      & 148$\pm$1.4                     & 0.68$\pm$0.09  & 0.81$\pm$0.09  \\
                     &                         & -       & -      & 201         & 4      & 143$\pm$1                       & 1.06$\pm$0.05  & 1.19$\pm$0.06  \\
                     &                         & -       & -      & 251         & 5      & 139$\pm$0.6                     & 1.59$\pm$0.04  & 1.72$\pm$0.03  \\  \hlineB{2}
\multirow{9}{*}{50}  & \multirow{4}{*}{\mainalgname} & 0.2     & 285,296 & 123$\pm$0.4    & 1      & 118.1$\pm$0.3                   & 0.35$\pm$0.04  & 1.98$\pm$0.22  \\
                     &                         & 0.1     & 56,924  & 177         & 1      & 133.4$\pm$1.6                   & 0.37$\pm$0.04  & 1.26$\pm$0.18  \\
                     &                         & 0.05    & 25,427  & 183         & 1      & 128.2$\pm$0.8                   & 0.38$\pm$0.05  & 1.16$\pm$0.11  \\
                     &                         & 0.01    & 13,344  & 346         & 2      & 124.9$\pm$0.4                   & 0.54$\pm$0.06  & 1.69$\pm$0.16  \\ \clineB{2-9}{2}
                     & \multirow{5}{*}{\kpar}    & -       & -      & 101         & 1      & 153$\pm$1.7                     & 0.05$\pm$0     & 0.25$\pm$0.04  \\
                     &                         & -       & -      & 201         & 2      & 139$\pm$1.1                     & 0.5$\pm$0.03   & 0.72$\pm$0.05  \\
                     &                         & -       & -      & 301         & 3      & 133$\pm$0.6                     & 1.13$\pm$0.05  & 1.38$\pm$0.08  \\
                     &                         & -       & -      & 401         & 4      & 129$\pm$0.7                     & 1.95$\pm$0.03  & 2.2$\pm$0.05   \\
                     &                         & -       & -      & 501         & 5      & 127$\pm$0.4                     & 2.96$\pm$0.1   & 3.24$\pm$0.11  \\ \hlineB{2}
\multirow{9}{*}{100} & \multirow{4}{*}{\mainalgname} & 0.2     & 633,272 & 179$\pm$0.4    & 1      & 106.7$\pm$0.2                   & 0.55$\pm$0.03  & 4.03$\pm$0.2   \\
                     &                         & 0.1     & 126,354 & 283         & 1      & 127.5$\pm$1.4                   & 0.49$\pm$0.06  & 2.06$\pm$0.2   \\
                     &                         & 0.05    & 56,440  & 289         & 1      & 121.5$\pm$0.7                   & 0.51$\pm$0.06  & 1.89$\pm$0.18  \\
                     &                         & 0.01    & 29,620  & 508         & 2      & 121.1$\pm$0.7                   & 0.73$\pm$0.06  & 2.62$\pm$0.14  \\ \clineB{2-9}{2}
                     & \multirow{5}{*}{\kpar}    & -       & -      & 201         & 1      & 137$\pm$0.8                     & 0.06$\pm$0.01  & 0.44$\pm$0.04  \\
                     &                         & -       & -      & 401         & 2      & 125$\pm$0.9                     & 0.85$\pm$0.03  & 1.29$\pm$0.06  \\
                     &                         & -       & -      & 601         & 3      & 120$\pm$0.7                     & 2.08$\pm$0.05  & 2.58$\pm$0.07  \\
                     &                         & -       & -      & 801         & 4      & 117$\pm$0.5                     & 3.75$\pm$0.05  & 4.29$\pm$0.1   \\
                     &                         & -       & -      & 1001        & 5      & 115$\pm$0.6                     & 5.62$\pm$0.08  & 6.15$\pm$0.08  \\ \hlineB{2}
\multirow{8}{*}{200} & \multirow{3}{*}{\mainalgname} & 0.1     & 277,721 & 480$\pm$17     & 1      & 117.3$\pm$1.3                   & 0.79$\pm$0.06  & 4.23$\pm$0.31  \\
                     &                         & 0.05    & 124,053 & 496         & 1      & 115.3$\pm$0.4                   & 0.77$\pm$0.07  & 3.54$\pm$0.21  \\
                     &                         & 0.01    & 65,104  & 820         & 2      & 116.7$\pm$0.7                   & 0.94$\pm$0.08  & 4.45$\pm$0.34  \\ \clineB{2-9}{2}
                     & \multirow{5}{*}{\kpar}    & -       & -      & 401         & 1      & 122$\pm$1.2                     & 0.06$\pm$0     & 0.84$\pm$0.04  \\
                     &                         & -       & -      & 801         & 2      & 112$\pm$0.3                     & 1.69$\pm$0.07  & 2.6$\pm$0.08   \\
                     &                         & -       & -      & 1201        & 3      & 108$\pm$0.3                     & 4.1$\pm$0.1    & 5.13$\pm$0.1   \\
                     &                         & -       & -      & 1601        & 4      & 106$\pm$0.3                     & 7.13$\pm$0.15  & 8.33$\pm$0.17  \\
                     &                         & -       & -      & 2001        & 5      & 104$\pm$0.2                     & 11.08$\pm$0.33 & 12.42$\pm$0.31 \\ \hlineB{2}
\end{tabular}
\end{table}

\begin{table}[]
\caption{\texttt{Census1990} dataset experiments with MiniBatchKMeans used as black-box.
‘T’ stands for time in seconds.}
\label{tab:experiments_summary_census_S}
\begin{tabular}{lllllllll} \hline 
$k$                  & ALG                     & $\epsilon$ & $P_1$   & Output size & Rounds   & Cost ($\cdot 10^6$) & T (machine)  & T (Total)    \\ \hlineB{3}
\multirow{9}{*}{25}  & \multirow{4}{*}{\mainalgname} & 0.2     & 95,908  & 90          & 1        & 171.3$\pm$1.7                   & 0.11$\pm$0.05 & 0.96$\pm$0.14 \\
 &                      & 0.1  & 22,018  & 121  & 1 & 187.8$\pm$4   & 0.11$\pm$0.04 & 0.65$\pm$0.16 \\
 &                      & 0.05 & 10,550  & 204  & 2 & 179.6$\pm$3.6 & 0.14$\pm$0.05 & 0.85$\pm$0.13 \\
 &                      & 0.01 & 5,856   & 489  & 4 & 175.8$\pm$1.6 & 0.3$\pm$0.07  & 1.65$\pm$0.17 \\ \clineB{2-9}{2}
 & \multirow{5}{*}{\kpar} & -    & -      & 51   & 1 & 418$\pm$133.8 & 0.05$\pm$0    & 0.16$\pm$0.04 \\
 &                      & -    & -      & 101  & 2 & 218$\pm$11.1  & 0.15$\pm$0    & 0.27$\pm$0.03 \\
 &                      & -    & -      & 151  & 3 & 199$\pm$3.4   & 0.28$\pm$0    & 0.4$\pm$0.03  \\
 &                      & -    & -      & 201  & 4 & 188$\pm$2.8   & 0.44$\pm$0.01 & 0.56$\pm$0.03 \\
 &                      & -    & -      & 251  & 5 & 185$\pm$3     & 0.61$\pm$0.01 & 0.76$\pm$0.06 \\ \hlineB{2}
\multirow{9}{*}{50}  & \multirow{4}{*}{\mainalgname} & 0.2     & 215,487 & 121         & 1        & 129.1$\pm$1.4                   & 0.1$\pm$0.03  & 1.79$\pm$0.19 \\
 &                      & 0.1  & 49,471  & 177  & 1 & 155.2$\pm$2.8 & 0.15$\pm$0.05 & 1.08$\pm$0.14 \\
 &                      & 0.05 & 23,704  & 266  & 2 & 139.8$\pm$1.4 & 0.15$\pm$0.06 & 1.2$\pm$0.16  \\
 &                      & 0.01 & 13,158  & 592  & 4 & 135.6$\pm$1.1 & 0.31$\pm$0.08 & 2.02$\pm$0.18 \\ \clineB{2-9}{2}
 & \multirow{5}{*}{\kpar} & -    & -      & 101  & 1 & 318$\pm$79.5  & 0.05$\pm$0    & 0.27$\pm$0.05 \\
 &                      & -    & -      & 201  & 2 & 169$\pm$6.2   & 0.2$\pm$0     & 0.41$\pm$0.02 \\
 &                      & -    & -      & 301  & 3 & 153$\pm$3.3   & 0.42$\pm$0.01 & 0.67$\pm$0.05 \\
 &                      & -    & -      & 401  & 4 & 144$\pm$2     & 0.66$\pm$0.01 & 0.94$\pm$0.07 \\
 &                      & -    & -      & 501  & 5 & 140$\pm$1.4   & 0.97$\pm$0.02 & 1.3$\pm$0.07  \\ \hlineB{2}
\multirow{9}{*}{100} & \multirow{4}{*}{\mainalgname} & 0.2     & 478,318 & 177         & 1        & 99.6$\pm$0.8                    & 0.22$\pm$0.08 & 4.27$\pm$0.34 \\
 &                      & 0.1  & 109,813 & 283  & 1 & 126.8$\pm$3.1 & 0.18$\pm$0.06 & 1.9$\pm$0.18  \\
 &                      & 0.05 & 52,616  & 378  & 2 & 110.5$\pm$1.3 & 0.17$\pm$0.06 & 2.22$\pm$0.16 \\
                     &                         & 0.01    & 29,207  & 722$\pm$32     & 3.1$\pm$0.3 & 105.3$\pm$1.4                   & 0.39$\pm$0.09 & 2.98$\pm$0.18 \\ \clineB{2-9}{2}
 & \multirow{5}{*}{\kpar} & -    & -      & 201  & 1 & 264$\pm$67.1  & 0.05$\pm$0    & 0.44$\pm$0.02 \\  
 &                      & -    & -      & 401  & 2 & 133$\pm$2.3   & 0.31$\pm$0.01 & 0.77$\pm$0.08 \\
 &                      & -    & -      & 601  & 3 & 119$\pm$1     & 0.65$\pm$0.01 & 1.15$\pm$0.06 \\
 &                      & -    & -      & 801  & 4 & 112$\pm$1     & 1.1$\pm$0.01  & 1.69$\pm$0.13 \\
 &                      & -    & -      & 1001 & 5 & 109$\pm$0.7   & 1.67$\pm$0.03 & 2.21$\pm$0.08 \\ \hlineB{2}
\multirow{8}{*}{200} & \multirow{3}{*}{\mainalgname} & 0.1     & 241,364 & 489         & 1        & 107.8$\pm$2.7                   & 0.31$\pm$0.06 & 4.33$\pm$0.43 \\
 &                      & 0.05 & 115,648 & 592  & 2 & 89.4$\pm$0.6  & 0.29$\pm$0.07 & 4.48$\pm$0.26 \\
 &                      & 0.01 & 64,197  & 1130 & 3 & 87.3$\pm$0.2  & 0.47$\pm$0.08 & 5.49$\pm$0.34 \\ \clineB{2-9}{2}
 & \multirow{5}{*}{\kpar} & -    & -      & 401  & 1 & 224$\pm$49.2  & 0.05$\pm$0    & 0.92$\pm$0.14 \\
 &                      & -    & -      & 801  & 2 & 104$\pm$2.7   & 0.5$\pm$0.01  & 1.45$\pm$0.07 \\
 &                      & -    & -      & 1201 & 3 & 93.8$\pm$1    & 1.14$\pm$0.02 & 2.27$\pm$0.09 \\
 &                      & -    & -      & 1601 & 4 & 88.7$\pm$0.4  & 1.96$\pm$0.03 & 3.19$\pm$0.06 \\
 &                      & -    & -      & 2001 & 5 & 87$\pm$0.6    & 3.03$\pm$0.06 & 4.35$\pm$0.1  \\ \hlineB{2}
\end{tabular}
\end{table}

\begin{table}[]
\caption{\texttt{KDDCup1999} dataset experiments with MiniBatchKMeans used as black-box.
‘T’ stands for time in seconds.}
\label{tab:experiments_summary_KDD_S}
\begin{tabular}{lllllllll} \hline 
$k$                  & Alg                     & $\epsilon$ & $P_{1}$   & Output size & Rounds    & Cost ($\cdot 10^{10}$) & T (machine)  & T (Total)    \\ \hlineB{3}
\multirow{9}{*}{25}  & \multirow{4}{*}{\mainalgname} & 0.2     & 110,088 & 115         & 1         & 6,273,229$\pm$102,539& 0.14$\pm$0.01 & 0.94$\pm$0.15 \\
                     &                         & 0.1     & 23,590  & 313         & 3         & 3,034,310$\pm$1,064,972          & 0.25$\pm$0.01 & 1.33$\pm$0.13 \\
                     &                         & 0.05    & 10,920  & 433         & 4         & 3,982,745$\pm$1124135            & 0.37$\pm$0.04 & 1.6$\pm$0.2   \\
                     &                         & 0.01    & 5,896   & 1243$\pm$61    & 10.5$\pm$0.5 & 2,344,966$\pm$946,061      & 0.88$\pm$0.06 & 3.88$\pm$0.21 \\ \clineB{2-9}{2}
                     & \multirow{5}{*}{\kpar}    & -       & -      & 51          & 1         & 254$\pm$35                       & 0.07$\pm$0    & 0.18$\pm$0.03 \\
                     &                         & -       & -      & 101         & 2         & 157$\pm$12.3                     & 0.23$\pm$0    & 0.34$\pm$0.03 \\
                     &                         & -       & -      & 151         & 3         & 148$\pm$22.3                     & 0.44$\pm$0.01 & 0.55$\pm$0.03 \\
                     &                         & -       & -      & 201         & 4         & 124$\pm$3.3                      & 0.71$\pm$0.01 & 0.82$\pm$0.04 \\
                     &                         & -       & -      & 251         & 5         & 126$\pm$11.8                     & 1.03$\pm$0.01 & 1.15$\pm$0.02 \\ \hlineB{2}
\multirow{9}{*}{50}  & \multirow{4}{*}{\mainalgname} & 0.2     & 247,347 & 171         & 1         & 5,971,765$\pm$351183           & 0.17$\pm$0.01 & 1.59$\pm$0.18 \\
                     &                         & 0.1     & 53,003  & 304         & 2         & 5,716,218$\pm$507,248           & 0.29$\pm$0.01 & 1.5$\pm$0.13  \\
                     &                         & 0.05    & 24,535  & 555$\pm$56     & 3.8$\pm$0.4  & 2,995,765$\pm$1,379,745        & 0.43$\pm$0.03 & 2.05$\pm$0.21 \\
                     &                         & 0.01    & 13,249  & 1248$\pm$47    & 8.1$\pm$0.3  & 3,946,555$\pm$954,957           & 0.89$\pm$0.02 & 3.86$\pm$0.23 \\ \clineB{2-9}{2}
                     & \multirow{5}{*}{\kpar}    & -       & -      & 101         & 1         & 109$\pm$29.1                     & 0.07$\pm$0    & 0.25$\pm$0.02 \\
                     &                         & -       & -      & 201         & 2         & 37.2$\pm$11                      & 0.33$\pm$0.01 & 0.56$\pm$0.08 \\
                     &                         & -       & -      & 301         & 3         & 35.2$\pm$4.4                     & 0.69$\pm$0.01 & 0.91$\pm$0.05 \\
                     &                         & -       & -      & 401         & 4         & 35.2$\pm$5.3                     & 1.13$\pm$0.01 & 1.36$\pm$0.04 \\
                     &                         & -       & -      & 501         & 5         & 32.8$\pm$5.6                     & 1.69$\pm$0.02 & 1.97$\pm$0.08 \\ \hlineB{2}
\multirow{9}{*}{100} & \multirow{4}{*}{\mainalgname} & 0.2     & 549,037 & 277         & 1         & 5,969,280$\pm$355,576& 0.26$\pm$0.02 & 3.22$\pm$0.22 \\
                     &                         & 0.1     & 117,651 & 466         & 2         & 5,587,433$\pm$940,351 & 0.38$\pm$0.02 & 2.66$\pm$0.27 \\
                     &                         & 0.05    & 54,461  & 667         & 3         & 3,924,526$\pm$674,470           & 0.52$\pm$0.02 & 2.89$\pm$0.18 \\
                     &                         & 0.01    & 29,409  & 1528        & 7         & 3,530,365$\pm$841,054           & 1.11$\pm$0.03 & 5.36$\pm$0.25 \\ \clineB{2-9}{2}
                     & \multirow{5}{*}{\kpar}    & -       & -      & 201         & 1         & 51.7$\pm$15.6                    & 0.06$\pm$0    & 0.47$\pm$0.05 \\
                     &                         & -       & -      & 401         & 2         & 6.5$\pm$0.7                      & 0.54$\pm$0.01 & 0.92$\pm$0.04 \\
                     &                         & -       & -      & 601         & 3         & 8.4$\pm$0.8                      & 1.16$\pm$0.02 & 1.62$\pm$0.14 \\
                     &                         & -       & -      & 801         & 4         & 8.5$\pm$1.3                      & 1.99$\pm$0.03 & 2.4$\pm$0.04  \\
                     &                         & -       & -      & 1001        & 5         & 8$\pm$0.5                        & 3.05$\pm$0.03 & 3.55$\pm$0.13 \\ \hlineB{2}
\multirow{8}{*}{200} & \multirow{3}{*}{\mainalgname} & 0.1     & 258,592 & 778         & 2         & 4,561,690$\pm$1,036,461       & 0.61$\pm$0.04 & 5.3$\pm$0.25  \\
                     &                         & 0.05    & 119,705 & 1088        & 3         & 2,859,826$\pm$1,177,854            & 0.78$\pm$0.07 & 5.2$\pm$0.21  \\
                     &                         & 0.01    & 64,641  & 2060        & 6         & 3,814,931$\pm$1,066,684          & 1.46$\pm$0.08 & 8.22$\pm$0.23 \\ \clineB{2-9}{2}
                     & \multirow{5}{*}{\kpar}    & -       & -      & 401         & 1         & 10.8$\pm$2.1                     & 0.06$\pm$0    & 0.89$\pm$0.18 \\
                     &                         & -       & -      & 801         & 2         & 1.7$\pm$0.5                      & 0.92$\pm$0.02 & 1.81$\pm$0.23 \\
                     &                         & -       & -      & 1201        & 3         & 2.6$\pm$0.4                      & 2.11$\pm$0.05 & 3.06$\pm$0.25 \\
                     &                         & -       & -      & 1601        & 4         & 2.8$\pm$0.1                      & 3.64$\pm$0.09 & 4.6$\pm$0.12  \\
                     &                         & -       & -      & 2001        & 5         & 2.4$\pm$0.4                      & 5.69$\pm$0.08 & 6.73$\pm$0.11 \\ \hlineB{2}
\end{tabular}
\end{table}

\begin{table}[]
\caption{\texttt{BigCross} dataset experiments with MiniBatchKMeans used as black-box.
‘T’ stands for time in seconds.}
\label{tab:experiments_summary_bigcross_S}
\begin{tabular}{lllllllll} \hline 
$k$                  & ALG                     & epsilon & $P_1$      & Output size & Rounds   & Cost (10\textasciicircum{}10) & T (machine)   & T (Total)     \\ \hlineB{3}
\multirow{9}{*}{25}  & \multirow{4}{*}{\mainalgname} & 0.2     & 126,978 & 90          & 1        & 328$\pm$2.6                      & 0.38$\pm$0.03  & 1.39$\pm$0.13  \\
                     &                         & 0.1     & 25,335   & 99$\pm$8       & 1        & 327$\pm$6.3                      & 0.37$\pm$0.05  & 0.86$\pm$0.12  \\
                     &                         & 0.05    & 11,316   & 204         & 2        & 345$\pm$8                        & 0.4$\pm$0.03   & 1.1$\pm$0.15   \\
                     &                         & 0.01    & 5,939    & 365$\pm$12     & 3        & 318$\pm$3.2                      & 0.79$\pm$0.02  & 1.81$\pm$0.11  \\ \clineB{2-9}{2}
                     & \multirow{5}{*}{\kpar}    & -       & -       & 51          & 1        & 519$\pm$39.6                     & 0.18$\pm$0.01  & 0.27$\pm$0.03  \\
                     &                         & -       & -       & 101         & 2        & 367$\pm$8.7                      & 0.49$\pm$0.01  & 0.6$\pm$0.02   \\
                     &                         & -       & -       & 151         & 3        & 350$\pm$4.8                      & 0.93$\pm$0.04  & 1.05$\pm$0.06  \\
                     &                         & -       & -       & 201         & 4        & 339$\pm$6.3                      & 1.59$\pm$0.05  & 1.71$\pm$0.06  \\
                     &                         & -       & -       & 251         & 5        & 330$\pm$5.5                      & 2.14$\pm$0.09  & 2.28$\pm$0.1   \\ \hlineB{2}
\multirow{9}{*}{50}  & \multirow{4}{*}{\mainalgname} & 0.2     & 285,296  & 121         & 1        & 222$\pm$3.7                      & 0.47$\pm$0.05  & 2.44$\pm$0.23  \\
                     &                         & 0.1     & 56,924   & 142$\pm$24     & 1        & 221$\pm$2.1                      & 0.45$\pm$0.02  & 1.3$\pm$0.17   \\
                     &                         & 0.05    & 25,427   & 266         & 2        & 242$\pm$4.2                      & 0.48$\pm$0.04  & 1.55$\pm$0.19  \\
                     &                         & 0.01    & 13,344   & 444         & 3        & 217$\pm$1.8                      & 0.85$\pm$0.13  & 2.26$\pm$0.2   \\ \clineB{2-9}{2}
                     & \multirow{5}{*}{\kpar}    & -       & -       & 101         & 1        & 365$\pm$28                       & 0.18$\pm$0.02  & 0.39$\pm$0.03  \\
                     &                         & -       & -       & 201         & 2        & 244$\pm$5.9                      & 0.78$\pm$0.07  & 1.02$\pm$0.09  \\
                     &                         & -       & -       & 301         & 3        & 230$\pm$2.1                      & 1.33$\pm$0.01  & 1.58$\pm$0.03  \\
                     &                         & -       & -       & 401         & 4        & 223$\pm$2.3                      & 2.27$\pm$0.07  & 2.55$\pm$0.1   \\
                     &                         & -       & -       & 501         & 5        & 217$\pm$1.4                      & 3.54$\pm$0.19  & 3.84$\pm$0.12  \\ \hlineB{2}
\multirow{9}{*}{100} & \multirow{4}{*}{\mainalgname} & 0.2     & 633,272  & 177         & 1        & 150$\pm$1.3                      & 0.59$\pm$0.05  & 5.17$\pm$0.23  \\
                     &                         & 0.1     & 126,354  & 193         & 1        & 148$\pm$1.2                      & 0.59$\pm$0.07  & 2.3$\pm$0.15   \\
                     &                         & 0.05    & 56,440   & 298$\pm$28     & 1.1$\pm$0.3 & 169$\pm$2.7                      & 0.59$\pm$0.05  & 2.15$\pm$0.18  \\
                     &                         & 0.01    & 29,620   & 612         & 3        & 154$\pm$1.3                      & 0.87$\pm$0.03  & 3.1$\pm$0.17   \\ \clineB{2-9}{2}
                     & \multirow{5}{*}{\kpar}    & -       & -       & 201         & 1        & 242$\pm$20                       & 0.18$\pm$0.03  & 0.56$\pm$0.06  \\
                     &                         & -       & -       & 401         & 2        & 169$\pm$2.3                      & 1.1$\pm$0.04   & 1.54$\pm$0.05  \\
                     &                         & -       & -       & 601         & 3        & 157$\pm$2                        & 2.43$\pm$0.18  & 2.94$\pm$0.18  \\
                     &                         & -       & -       & 801         & 4        & 153$\pm$1.5                      & 3.94$\pm$0.25  & 4.49$\pm$0.33  \\
                     &                         & -       & -       & 1001        & 5        & 150$\pm$1                        & 6.17$\pm$0.31  & 6.71$\pm$0.32  \\ \hlineB{2}
\multirow{8}{*}{200} & \multirow{3}{*}{\mainalgname} & 0.1     & 277,721  & 289         & 1        & 102$\pm$0.3                      & 0.74$\pm$0.02  & 4.33$\pm$0.25  \\
                     &                         & 0.05    & 124,053  & 496         & 1        & 116$\pm$2.1                      & 0.79$\pm$0.07  & 3.72$\pm$0.16  \\
                     &                         & 0.01    & 65,104   & 820         & 2        & 109$\pm$0.7                      & 1.13$\pm$0.05  & 4.74$\pm$0.18  \\ \clineB{2-9}{2}
                     & \multirow{5}{*}{\kpar}    & -       & -       & 401         & 1        & 166$\pm$8.6                      & 0.21$\pm$0.06  & 1.04$\pm$0.07  \\
                     &                         & -       & -       & 801         & 2        & 119$\pm$1.4                      & 1.8$\pm$0.03   & 2.72$\pm$0.03  \\
                     &                         & -       & -       & 1201        & 3        & 111$\pm$1                        & 4.08$\pm$0.13  & 5.31$\pm$0.28  \\
                     &                         & -       & -       & 1601        & 4        & 107$\pm$0.5                      & 7.31$\pm$0.09  & 8.56$\pm$0.09  \\
                     &                         & -       & -       & 2001        & 5        & 106$\pm$0.4                      & 10.86$\pm$0.32 & 12.24$\pm$0.32 \\ \hlineB{2}
\end{tabular}
\end{table}

\end{document}